\documentclass[12pt]{article}
\usepackage{amsmath}
\usepackage{amsfonts} 
\usepackage{graphicx,psfrag,epsf}
\usepackage{enumerate}
\usepackage{natbib}
\usepackage{url} 
\usepackage{hyperref}
\usepackage{graphicx}
\usepackage{amsthm}
\usepackage{bbm}
\usepackage{soul}
\usepackage{paralist}
\usepackage{mathtools}
\usepackage{adjustbox}
\usepackage{booktabs}
\usepackage{xcolor}
\usepackage{amssymb}
\usepackage{multirow}  
\usepackage{array}     
\usepackage{physics}
\usepackage{tablefootnote}
\usepackage{enumitem}
\usepackage{amssymb}

\usepackage{lipsum}

\usepackage{bbm}
\usepackage{booktabs}
\usepackage{adjustbox}
\usepackage{xcolor}
\usepackage{bbding}

\usepackage{hyperref}
\hypersetup{colorlinks=true, linkcolor=blue, citecolor=blue, filecolor=blue}

\newcommand{\PP}{\mathbb{P}}

\newcommand{\indep}{\perp\!\!\!\perp}
\newcommand{\calM}{\mathcal{M}}
\newcommand{\calX}{\mathcal{X}}

\newcommand{\E}{\mathbb{E}}

\newcommand{\T}{\mathsf{T}}

\newcommand{\Var}{\mathrm{var}}

\DeclareMathOperator*{\esssup}{ess\,sup}
\DeclareMathOperator*{\essinf}{ess\,inf}
\allowdisplaybreaks

\allowdisplaybreaks

\newtheorem{theorem}{Theorem}
\newtheorem{lemma}{Lemma}
\newtheorem{corollary}{Corollary}
\newtheorem{assumption}{Assumption}
\newtheorem{proposition}{Proposition}

\theoremstyle{definition}
\newtheorem{definition}{Definition}

\theoremstyle{remark}
\newtheorem{remark}{Remark}

\usepackage{setspace}
\usepackage{bibspacing}
\newcommand{\blind}{0}

\begin{document}

\def\spacingset#1{\renewcommand{\baselinestretch}%
{#1}\small\normalsize} \spacingset{1}


\if0\blind
{
  \title{\bf 
Sensitivity analysis for causal mediation: bridge score, sharp sensitivity bounds, and calibration

  }
  \author{Yuki Ohnishi$^{1*}$ and Fan Li$^{1 \ddagger}$\\
    $^1$Department of Biostatistics,  
    Yale University School of Public Health, \\
    New Haven, CT, United States\\
    \\
    $^*$yuki.ohnishi@yale.edu, 
    $^\ddagger$fan.f.li@yale.edu}
  \maketitle
} \fi

\if1\blind
{
  \bigskip
  \bigskip
  \bigskip
  \begin{center}
    {\LARGE\bf }
\end{center}
  \medskip
} \fi

\bigskip

\begin{abstract}
Causal mediation analysis decomposes the total treatment effect into a portion operating through a hypothesized mediator and a residual direct portion. Identification of natural direct and indirect effects typically rests on the mediator stage of sequential ignorability, which cannot be empirically verified and requires explicit sensitivity analysis. We formulate the \emph{bridge score}, a mediator-stage balancing score, as a low-dimensional vector formed from the two treatment-specific mediator densities at a common mediator value, and show that it balances baseline covariates for the mediator stage relevant to natural effect identification. Conditional on the bridge score, we derive a sharp pointwise variance envelope on the unidentified mediator-outcome confounding function in terms of latent outcome relevance and residual selection. To make the bound operational for sensitivity analysis, we further introduce a residual budget calibration approach based on local residual outcome variation and record a complementary range bound for support-based restrictions. Finally, we show how the pointwise bound can be operationalized for inference through a scalar functional reduction and a Bayesian g-computation algorithm that combines observed-data posterior uncertainty with user-specified sensitivity uncertainty, rather than treating the unidentified sensitivity corrections as learned from the likelihood.
\end{abstract}

\noindent%
{\it Keywords: causal mediation analysis; sensitivity analysis; bridge score; sharp bound; Bayesian g-computation}
\vfill

\newpage
\spacingset{1.2} 

\section{Introduction}
\label{sec:introduction}
Causal mediation analysis decomposes the average (total) treatment effect into the portion that operates through a hypothesized mediator (referred to as the natural indirect effect) and the portion that works around the mediator (referred to as the natural direct effect). Identification of the natural direct and indirect effects typically rests on the \emph{sequential ignorability} assumption \citep{Imai2010}. That is, treatment assignment must be ignorable, and the mediator must be ignorable given treatment and baseline covariates. The second condition is rarely supported by design, and any unmeasured factor jointly driving the mediator and the outcome leads to its violation. Hence, there remains substantial interest in pursuing a calibrated sensitivity analysis for the no mediator-outcome confounding assumption to strengthen the credibility of causal mediation analysis.

An emerging literature has developed sensitivity analysis for violations of sequential ignorability. \citet{Imai2010} introduced a parametric $\rho$-based parameter tied to the residual correlation between mediator and outcome errors; \citet{Tchetgen2012} developed semiparametric, multiply robust identification with a sensitivity approach through a user-specified, possibly multivariable confounding function; \citet{VanderWeele2010_bias} developed bias formulas and effect decompositions, complemented by the sensitivity analysis \citep{VanderWeele2014_effect_decomposition} and bivariate normal latent confounder analyses \citep{le_Cessie_2016,Lindmark2018}. \citet{Ding2016_sharp}, as well as follow-up work \citep[e.g.,][]{SjolanderWaernbaum_2024}, established risk ratio sharp bounds under interpretable parameters for the strength of confounding; Bayesian treatments include \citet{mccandless2017,Daniels_Linero_Roy_2023}, which emphasize prior factorization into likelihood and sensitivity components.

Despite the aforementioned literature, three challenges remain for credible, nonparametric sensitivity analysis of mediator ignorability on the additive effect measure scale, which is arguably the most commonly used scale to quantify causal effects in scientific research. First, existing parameterizations face a tension between interpretability and localization. Scalar sensitivity parameters are easy to report, but they collapse violations that may vary with the treatment arm, mediator value, and covariate profile; fully localized sensitivity functions are more faithful to the structure of mediator-outcome confounding, but are difficult to elicit, estimate, and calibrate when the covariate is even moderately high-dimensional. 
What has not been fully developed is a general low-dimensional reduction of the sensitivity object that preserves the mediator information relevant to the cross-world natural effect functional without requiring a full covariate specification. A related strand uses treatment-specific mediator densities to construct inverse weights for mediation \citep{TchetgenTchetgen2013_IORW,HongQinYang2018}. However, they do not explicitly develop the mediator-density vector as a balancing score or as the localization scale for additive mediator-outcome sensitivity analysis.
Second, although sharp nonparametric bounds on mediator-outcome confounding under interpretable parameters for the strength of confounding exist on the risk ratio scale and are most natural for binary outcomes \citep{Ding2016_sharp,SjolanderWaernbaum_2024}, no analogous sharp additive envelope is available when the effect measured is defined by a mean difference. Additive sensitivity in the existing literature is handled either by parametric bias formulas that require a specific latent confounding structure \citep{VanderWeele2010_bias,le_Cessie_2016,Lindmark2018} or by Manski-style bounds that use no sensitivity parameters at all and are likely too wide and hence uninformative \citep{Sjolander2009}.
Third, additive sensitivity analysis requires an operational calibration of two unidentified ingredients: residual selection on a latent mediator-outcome confounder and the outcome relevance of that confounder. Directly specifying a latent outcome-relevance parameter on the raw outcome scale can be difficult to interpret across applications, while reducing the analysis to a single robustness scalar hides the distinct roles of residual selection and outcome relevance. A practical calibration therefore needs to keep these two components visible while anchoring the outcome-relevance component to an interpretable local observed-data scale.

In this article, we address these challenges through several contributions. First, we formulate the \emph{bridge score} as a mediator-stage balancing-score analogue of the propensity score. It is a two-dimensional vector formed from both treatment-specific mediator densities at a common mediator value. It places the identified outcome regression and the unidentified mediator-outcome confounding correction on the same low-dimensional scale, while encoding the arm asymmetry in the estimand. In contrast to the density-based methods \citep{TchetgenTchetgen2013_IORW,HongQinYang2018}, which collapse the two arm-specific densities into a single inverse weight for natural effect estimation, the bridge score retains them as a vector and uses them as a balancing score on which the unidentified sensitivity function is defined. Thus, the contribution is not the use of mediator densities alone, but the explicit balancing-score formulation and its use as the organizing scale for localized sensitivity analysis. Second, we derive a sharp pointwise additive variance envelope for the bridge score sensitivity function. At each mediator and bridge score stratum, this envelope is parametrized by two primitive quantities, i.e., the local variance of the latent conditional mean curve that measures outcome relevance of the unmeasured mediator-outcome confounder and a chi-square divergence that measures residual mediator-induced selection in the latent confounder distribution. For calibration, we also use a likelihood-ratio cap and put residual selection on a familiar multiplicative scale. The induced aggregate envelope is sharp over the rectangular class of measurable bridge score sensitivity functions satisfying the pointwise envelope, while narrower latent structural models may impose additional compatibility restrictions. We also record a complementary total-variation (range) envelope, which is useful when outcome support information is available. Third, we provide a residual variance budget calibration that operationalizes the latent mean-variance component by anchoring it to local residual outcome variation, while retaining the residual selection parameter on its likelihood-ratio scale. We then operationalize inference through a Bayesian g-computation implementation that combines observed-data posterior uncertainty in nuisance and calibration quantities with user-specified uncertainty over unidentified sensitivity corrections. To help position our work in the existing literature, Table~\ref{tab:sa_comparison} summarizes the contrast across the dimensions of scale, sensitivity object, calibration route, and target outcome. Detailed positioning of our work relative to existing methods is further discussed in Supplementary Material Section \ref{sec:related_work_summary}.

\begin{table}[h]
\centering
\caption{Connections and differences between the proposed framework and major sensitivity analysis approaches for unmeasured mediator-outcome confounding.}
\label{tab:sa_comparison}
\small
\begin{adjustbox}{width=\textwidth}
\begin{tabular}{llllll}
\toprule
Method & Scale & Sensitivity object & Calibration & Sharp bound & Any outcome type \\
\midrule
\citet{Imai2010} & Additive & $\rho$ residual correlation & \XSolidBrush & \XSolidBrush & \Checkmark \\
\citet{Tchetgen2012} & Additive & Confounding function & \XSolidBrush & \XSolidBrush & \Checkmark \\
\citet{Ding2016_sharp} & risk ratio & $(\mathrm{RR}_{UY},\mathrm{RR}_{AU})$ & \XSolidBrush & \Checkmark & \XSolidBrush  \\
\citet{HongQinYang2018} & Additive & Two sensitivity parameters & \XSolidBrush & \XSolidBrush & \Checkmark \\
\citet{Daniels_Linero_Roy_2023} & Additive & Confounding function & Residual budget & \XSolidBrush & \Checkmark \\
This paper & Additive & $(v_a,\chi_a,\gamma_a)$ on bridge score & Residual budget & \Checkmark & \Checkmark \\
\bottomrule
\end{tabular}
\end{adjustbox}
\end{table}

\section{Data structure, causal estimands and assumptions}
\label{sec:bridge_score_sa}

We consider $n$ independent observations $O_i=(X_i,A_i,M_i,Y_i)$, each drawn from a common superpopulation. Here $A_i\in\{0,1\}$ is the treatment assignment, $X_i\in\calX$ is a vector of pretreatment covariates, $M_i\in\calM$ is a mediator, and $Y_i$ is the outcome. We suppress the unit index when no confusion can arise.
Let $\nu$ be a $\sigma$-finite dominating measure on the support $\calM$ of $M$ such that, for each $a\in\{0,1\}$ and $x\in\calX$, the conditional law of $M$ given $(A=a,X=x)$ has density or mass function $f_a(m\mid x)= dF_{M\mid A=a,X=x}(m) / d\nu$ for $m\in\calM$.
This notation covers both cases of a discrete and a continuous mediator. For $a\in\{0,1\}$ and $m\in\calM$, let $M(a)$ denote the potential mediator under treatment $a$ and $Y(a,m)$ the potential outcome under treatment $a$ and mediator value $m$. Our primary cross-world estimand is the mediated mean under treatment with the mediator set to its natural value under control, $\theta = \E\{Y(1,M(0))\}$.
Let $\delta_a=\E\{Y(a,M(a))\}$ denote the arm-specific observed world means for $a\in\{0,1\}$. 
The natural direct effect (NDE) and indirect effect (NIE) on the additive scale are $\mathrm{NDE}=\theta-\delta_0$ and $\mathrm{NIE}=\delta_1-\theta$,
so that $\mathrm{TE}=\delta_1-\delta_0=\mathrm{NDE}+\mathrm{NIE}$.

We invoke consistency, positivity, and randomization. Treatment randomization is the first stage of sequential ignorability; the second stage, mediator ignorability, is intentionally not assumed and is replaced below by sensitivity functions.

\begin{assumption}[Consistency]
\label{ass:bridge_consistency}
For all $a\in\{0,1\}$ and $m\in\calM$, $A=a \implies M(a)=M$, and $A=a,\ M=m \implies Y(a,m)=Y$.
\end{assumption}

\begin{assumption}[Positivity]
\label{ass:bridge_positivity}
For each $x$ in the support of $X$, $0<\PP(A=1\mid X=x)<1$. For each $a\in\{0,1\}$ and each $(x,m)$ in the support relevant to $\theta$, $f_a(m\mid x)>0$.
\end{assumption}

\begin{assumption}[Randomization]
\label{ass:treatment_randomization}
For all $a,a'\in\{0,1\}$ and $m\in\calM$, $\{Y(a,m),M(a'),X\}\indep A$.
\end{assumption}

Consistency links observed variables to potential outcomes at the realized treatment and mediator. Assumption~\ref{ass:treatment_randomization} is guaranteed by randomized treatment assignment. Standard mediation analysis would add mediator ignorability, $Y(a',m)\indep M(a)\mid A=a,X$; the rest of the paper studies departures from that strong cross-world identification condition. Throughout this paper, we assume treatment randomization and focus exclusively on sensitivity to violations of mediator ignorability. This setting is directly relevant to randomized trials, where treatment ignorability is guaranteed by design and mediator-outcome confounding is often the primary threat to identification. The same logic can be extended to observational studies by introducing additional sensitivity functions for violations of treatment ignorability; we return to this discussion in Section \ref{sec:discussion}.

\section{Bridge score for mediator ignorability}

The cross-world quantity $Y(1,M(0))$ depends simultaneously on the control mediator law and on the treated outcome regression. This motivates a two-component score that retains both mediator laws at the same $(m,x)$.

\begin{definition}[Bridge score]
    \label{def:bridge_score}
    For $m\in\calM$ and $x\in\calX$, define the bridge score vector
    \[
    B(m,x)=\bigl(B_0(m,x),B_1(m,x)\bigr)
    =\bigl(f_0(m\mid x),f_1(m\mid x)\bigr).
    \]
    For a fixed mediator value, write $B(m)=B(m,X)$.
    When the mediator is observed at its realized value, we write $B(M)=B(M,X)$.
\end{definition}
\noindent 
The first component is the control mediator law evaluated at $m$, and the second component is the treated mediator law evaluated at the same $m$. This score is identified from the observed mediator distribution and keeps the two nuisance laws needed for $\theta$ on one scale. 
Throughout our technical development, conditioning on $B(m)$ means conditioning on the fixed-$m$ random variable $B(m,X)$. For estimation and operating our methods, one often computes $B(M,X)$ at the realized mediator; this equals $B(m,X)$ only inside conditional quantities where the realized mediator is $m$, a distinction that is important for continuous mediators. Because $B(m,x)$ is density-valued for continuous $M$, a common one-to-one transformation of the mediator multiplies both density components by the Jacobian at the transformed mediator value and therefore preserves the score fibers at fixed $m$. 
The term ``bridge'' here refers to the role of $B(m,x)$ in coupling the two arm-specific mediator laws at the same mediator value $m$, thereby providing the common balancing scale on which both the identified regression and the unidentified sensitivity correction are indexed. It is unrelated to the ``bridge function'' of proximal causal inference \citep{Miao01102024}, which is a nuisance function solving an integral equation in the proxy variable identification problem.

\begin{lemma}[Balancing property and score-level ignorability]
\label{lem:bridge_balancing}
Fix $a,a'\in\{0,1\}$ and $m\in\calM$. Then $f(x\mid A=a,M=m,B(m)=b)=f(x\mid A=a,B(m)=b)$ for all $(x,b)$ in the support; equivalently, conditional on $(A,B(m))$, the event $M=m$ carries no additional information about $X$. Furthermore, if mediator ignorability $Y(a',m)\indep M \mid A=a,X$ holds, then $Y(a',m)\indep M \mid A=a,B(m)$.
\end{lemma}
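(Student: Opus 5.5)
The plan is a direct Bayes-rule computation: the density of $M$ at $m$ given $(A=a,X=x)$ is a coordinate of $B(m,x)$, so it is constant on each fiber $\{x: B(m,x)=b\}$. I will write $f(x\mid a,b)$ for the conditional density of $X$ given $A=a$ and $B(m)=b$, with respect to an appropriate dominating measure on the fiber, obtained via disintegration of the law of $X\mid A=a$ along the measurable map $x\mapsto B(m,x)$. Conditioning on $M=m$ is understood through densities in the usual way, as the paper does for continuous mediators.

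\textbf{Balancing property.} First I condition on $A=a$ and apply Bayes' rule:
\[
f(x\mid A=a,M=m,B(m)=b)=\frac{f_a(m\mid x)\,f(x\mid A=a,B(m)=b)}{\int f_a(m\mid x')\,f(x'\mid A=a,B(m)=b)\,dx'} .
\]
This uses the fact that the event $\{B(m)=b\}$ is a function of $X$ only, so conditioning on it and then on $M=m$ just multiplies the fiber-restricted law by the likelihood $f_a(m\mid x)$. On the fiber, $f_a(m\mid x)=b_a$ for every $x$. The numerator is therefore $b_a\,f(x\mid a,b)$ and the denominator is $b_a$, which is positive by Assumption~\ref{ass:bridge_positivity}. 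The two cancel, and this gives the first claim. In words, the score carries $f_a(m\mid x)$ for \emph{both} $a\in\{0,1\}$, so the identity holds whichever arm is conditioned on.

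\textbf{Score-level ignorability.} I need to show that $P\{Y(a',m)\in\cdot \mid A=a,M=m,B(m)=b\}$ equals $P\{Y(a',m)\in\cdot\mid A=a,B(m)=b\}$. I expand the left side by iterated conditioning over $X$. By mediator ignorability, $P\{Y(a',m)\in\cdot\mid A=a,M=m,X=x\}=P\{Y(a',m)\in\cdot\mid A=a,X=x\}$, and since $B(m)$ is a function of $X$, this is also the law given $(A=a,X=x,B(m)=b)$. Integrating against $f(x\mid a,m,b)$ and replacing that density by $f(x\mid a,b)$ via the balancing property yields the right side. The same argument with $M=m$ replaced by a general event, or a direct use of the standard propensity-score argument of Rosenbaum and Rubin with $M$ in the role of treatment, delivers the stated independence at the value $m$. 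Note that independence here means the conditional law of $Y(a',m)$ does not depend on whether $M=m$ at the fixed $m$ defining the score, which is the relevant notion for this fixed-$m$ score.

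\textbf{Main obstacle.} The main difficulty is measure-theoretic rigor for continuous $M$ and $X$. Conditioning on the null events $\{M=m\}$ and $\{B(m)=b\}$ needs regular conditional distributions, and the fixed-$m$ score $B(m,X)$ must be distinguished from $B(M,X)$. I would handle this by working with joint densities relative to $\nu$ and a disintegration of $P_X$ along $B(m,\cdot)$. The cancellation of $b_a$ is then an almost-everywhere identity on fibers, and I would state the result as holding for $\nu$-a.e. $m$ and a.e. $b$.
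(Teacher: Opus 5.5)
Your proposal is correct and follows essentially the same route as the paper. The paper also uses Bayes' rule with the likelihood factor $f_a(m\mid x)=b_a$ constant on the fiber, so that normalization gives the balancing identity. For the second part it likewise conditions on $X$, applies mediator ignorability to the inner conditional, and uses the balancing property to replace the mixing law; it phrases this via bounded measurable test functions $\varphi$, which is equivalent to your statement about conditional laws.
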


The proof is provided in Supplementary Material Section \ref{proof:lem:bridge_balancing}.
The bridge score plays for the mediator a role analogous to the propensity score for treatment assignment \citep{ROSENBAUM1983}. The analogy is about balancing, not about treatment assignment. Once $(A,B(m))$ is fixed, conditioning on the event $M=m$ does not further change the distribution of $X$. Unlike a propensity score, $B(m,x)$ is indexed by the mediator value and is generally vector-valued, because the natural effect target combines the control mediator law and treated outcome regression. It also differs from a principal score \citep[e.g.,][]{Ding2016PScore}. That is, the bridge score is built from identified mediator densities rather than from the unidentified joint law of $(M(0),M(1))$. Next, the primitive unidentified objects are mediator-outcome confounding functions at the score level.

\begin{definition}[bridge score sensitivity function]
\label{def:bridge_sensitivity}
For $a\in\{0,1\}$ and $m\in\calM$, define
\begin{equation}
\label{eq:sensitivity_function}
\Delta_a(m,b)
=
\E\{Y(1,m)\mid A=a,M=m,B(m)=b\}
-
\E\{Y(1,m)\mid A=a,B(m)=b\}.
\end{equation}
\end{definition}

The function $\Delta_a(m,b)$ measures the residual confounding induced by conditioning on the observed mediator value $M=m$ within arm $a$, after conditioning on the bridge score. The subscript $a$ indexes the arm in which this mediator selection is evaluated; the treatment for the potential outcome remains fixed at $1$ because the target $\theta$ uses $Y(1,M(0))$. The analogue indexed by full covariates is
$\Delta_a^\ast(m,x)
=
\E\{Y(1,m)\mid A=a,M=m,X=x\}
-
\E\{Y(1,m)\mid A=a,X=x\}.$
Both functions vanish under exact mediator ignorability.

\begin{proposition}[Projection of the full confounding function onto the bridge score]
\label{prop:bridge_projection}
For fixed $(a,m)$, $\Delta_a(m,b)=\E\!\left[\Delta_a^\ast(m,X)\mid A=a,B(m)=b\right].$
\end{proposition}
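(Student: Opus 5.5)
The plan is to expand both terms of $\Delta_a(m,b)$ by iterated expectations over $X$. Since $B(m)=B(m,X)$ is a measurable function of $X$, conditioning on $X$ refines conditioning on $B(m)$. The key input is the balancing property in Lemma~\ref{lem:bridge_balancing}: the conditional law of $X$ given $(A=a,M=m,B(m)=b)$ coincides with its law given $(A=a,B(m)=b)$. Once this is in hand, the two terms of $\Delta_a$ and the two terms of $\Delta_a^\ast$ line up under a single conditional law of $X$.

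\emph{Step 1 (first term).} By the tower property,
\[
\E\{Y(1,m)\mid A=a,M=m,B(m)=b\}=\E\bigl[\E\{Y(1,m)\mid A=a,M=m,X\}\mid A=a,M=m,B(m)=b\bigr].
\]
This uses that $B(m)$ is $\sigma(X)$-measurable, so that $(A=a,M=m,X)$ determines $B(m)$. Lemma~\ref{lem:bridge_balancing} then replaces the outer conditional law of $X$ by its law given $(A=a,B(m)=b)$. The result is
\[
\E\bigl[\mu_{a}^{M}(m,X)\mid A=a,B(m)=b\bigr], \qquad \mu_a^M(m,x)=\E\{Y(1,m)\mid A=a,M=m,X=x\}.
\]

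\emph{Step 2 (second term).} Similarly,
\[
\E\{Y(1,m)\mid A=a,B(m)=b\}=\E\bigl[\mu_a(m,X)\mid A=a,B(m)=b\bigr], \qquad \mu_a(m,x)=\E\{Y(1,m)\mid A=a,X=x\}.
\]
No balancing is needed here, only the tower property.

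\emph{Step 3 (combine).} Subtracting the two expressions and using linearity of conditional expectation under the common law $X\mid A=a,B(m)=b$ gives $\E[\mu_a^M(m,X)-\mu_a(m,X)\mid A=a,B(m)=b]$. This is exactly $\E[\Delta_a^\ast(m,X)\mid A=a,B(m)=b]$.

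The main obstacle is rigor for continuous $M$. The event $M=m$ has probability zero, so the conditional expectations given $M=m$ must be interpreted as regular conditional distributions or density-based versions, evaluated at the fixed value $m$. The delicate point is that Step 1 applies the tower property conditionally on $M=m$. I would handle this by writing the joint conditional density $f(x\mid A=a,M=m)\propto f_a(m\mid x)f(x\mid A=a)$. Restricting to the fiber $\{x:B(m,x)=b\}$, the factor $f_a(m\mid x)=b_a$ is constant. This is precisely the mechanism behind Lemma~\ref{lem:bridge_balancing}, and it makes the identity hold for the density-based versions pointwise in $m$. It also requires integrability of $Y(1,m)$ and positivity (Assumption~\ref{ass:bridge_positivity}) to ensure the conditional quantities are well defined on the support.
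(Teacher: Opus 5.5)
Your proposal is correct and follows essentially the same route as the paper: iterated expectation over $X$ (using that $B(m)$ is a function of $X$), combined with the balancing property of Lemma~\ref{lem:bridge_balancing}. The only difference is cosmetic. The paper splits the inner regression as $\E\{Y(1,m)\mid A=a,X\}+\Delta_a^\ast(m,X)$ under the law given $(A=a,M=m,B(m)=b)$ and then invokes the lemma twice, whereas you move the whole first term to the law given $(A=a,B(m)=b)$ with a single application before subtracting.
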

The proof is provided in Supplementary Material Section \ref{proof:prop:bridge_projection}.
By Proposition \ref{prop:bridge_projection}, the bridge score sensitivity function is the conditional expectation projection of the full confounding structure indexed by covariates onto the score.
Thus, the bridge score sensitivity function is not an ad hoc simplification; rather, it is a principled, low-dimensional summary of the full confounding structure induced by the bridge score. \citet{Li2011} proposed a sensitivity analysis with a similar principle based on the propensity score for unmeasured treatment-outcome confounding. In our setting, the bridge score sensitivity function plays the corresponding role for mediator ignorability.
To proceed, we define the observed outcome regression on the treated arm $\mu_1(m,b)=\E(Y\mid A=1,M=m,B(m)=b)$, and we obtain the following identification result.

\begin{theorem}[Identification under bridge score sensitivity functions]
\label{thm:bridge_identification}
Suppose Assumptions~\ref{ass:bridge_consistency}--\ref{ass:treatment_randomization} hold. Then
\begin{align}
\theta
=
\E\!\left[
\int_{\calM}
\Bigl\{
\mu_1\bigl(m,B(m)\bigr)-\Delta_1\bigl(m,B(m)\bigr)+\Delta_0\bigl(m,B(m)\bigr)
\Bigr\}
f_0(m\mid X)\,\nu(dm)
\right].
\label{eq:bridge_theta_identified}
\end{align}
\end{theorem}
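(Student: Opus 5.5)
Our plan is to start from the cross-world target, reduce it to a mediator integral against $f_0(m\mid X)$, and then rewrite the integrand at the level of the bridge score by telescoping through the two sensitivity functions.

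\emph{Step 1: reduce to an integral over the control mediator law.} By iterated expectation over $X$ and randomization (Assumption~\ref{ass:treatment_randomization}), $\theta=\E[\E\{Y(1,M(0))\mid A=0,X\}]$. Conditioning further on $M(0)$, and using consistency ($M(0)=M$ when $A=0$), we get
\[
\E\{Y(1,M(0))\mid A=0,X\}=\int_{\calM}\E\{Y(1,m)\mid A=0,M=m,X\}\,f_0(m\mid X)\,\nu(dm).
\]

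\emph{Step 2: pass to the bridge score.} For fixed $m$, $B(m)=B(m,X)$ is a function of $X$. Define $g(m,X)=\E\{Y(1,m)\mid A=0,M=m,X\}$. We need $\E\{\int g(m,X)f_0(m\mid X)\nu(dm)\}$ to equal the same expression with $g$ replaced by $\E\{Y(1,m)\mid A=0,M=m,B(m)\}$. Lemma~\ref{lem:bridge_balancing} gives $f(x\mid A=0,M=m,B(m)=b)=f(x\mid A=0,B(m)=b)$.

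Write the expectation as $\int\!\int g(m,x)f_0(m\mid x)f(x)\,dx\,\nu(dm)$, which is valid by Fubini. Using $f(x)=f(x\mid A=0)$ (randomization), the product $f_0(m\mid x)f(x\mid A=0)$ is proportional to $f(x\mid A=0,M=m)$, with constant $f(m\mid A=0)$. Now condition on $B(m)$ inside $\{A=0,M=m\}$: the tower property replaces $g(m,x)$ by $\E\{Y(1,m)\mid A=0,M=m,B(m)\}$. Reversing the factorization writes this as an $X$-expectation of a function of $B(m,X)$, weighted by $f_0(m\mid X)$.

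This bookkeeping step is where we expect the main difficulty. The care needed is that $f_0(m\mid X)$ is itself $B_0(m,X)$, so it is $B(m)$-measurable. That measurability, rather than the balancing lemma per se, is what makes the tower step legitimate. For continuous $M$ the conditioning on $\{M=m\}$ must be read via densities.

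\emph{Step 3: telescope through the sensitivity functions.} Definition~\ref{def:bridge_sensitivity} with $a=0$ gives
\[
\E\{Y(1,m)\mid A=0,M=m,B(m)\}=\Delta_0(m,B(m))+\E\{Y(1,m)\mid A=0,B(m)\}.
\]
By randomization, $Y(1,m)$ and $X$ are jointly independent of $A$, so $\E\{Y(1,m)\mid A=0,B(m)\}=\E\{Y(1,m)\mid A=1,B(m)\}$.

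Applying Definition~\ref{def:bridge_sensitivity} with $a=1$ then gives
\[
\E\{Y(1,m)\mid A=1,B(m)\}=\E\{Y(1,m)\mid A=1,M=m,B(m)\}-\Delta_1(m,B(m)).
\]
By consistency, $Y(1,m)=Y$ on $\{A=1,M=m\}$, so the first term is $\mu_1(m,B(m))$. Positivity (Assumption~\ref{ass:bridge_positivity}) ensures every conditional quantity here is well defined.

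Substituting Steps 2 and 3 back into Step 1 yields \eqref{eq:bridge_theta_identified}.
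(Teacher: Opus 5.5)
Your proof is correct, but it is organized differently from the paper's.

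The paper telescopes first at the covariate level, obtaining $\E\{Y(1,m)\mid M(0)=m,X\}=\E(Y\mid A=1,M=m,X)-\Delta_1^\ast(m,X)+\Delta_0^\ast(m,X)$. It then projects each of the three terms separately onto $B(m)$. It applies the tower property under the marginal law of $X$, which is where the $B(m)$-measurability of $f_0(m\mid X)$ is used. It then uses $X\indep A$, Lemma~\ref{lem:bridge_balancing} for the $\mu_1$ term, and Proposition~\ref{prop:bridge_projection} for the $\Delta_a^\ast$ terms.

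You instead coarsen only once. You absorb the weight $f_0(m\mid X)$ into the conditional law of $X$ given $\{A=0,M=m\}$ and apply the tower property under that law. You then telescope directly at the score level, using only the definitions of $\Delta_0$ and $\Delta_1$, randomization, and consistency.

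Your route is more elementary, and it never actually uses Lemma~\ref{lem:bridge_balancing}, even though you cite it. The tower step inside $\{A=0,M=m\}$ is valid for any function of $X$. So your argument shows the identity holds with $B(m)$ replaced by any coarsening of $X$, with $\mu_1,\Delta_0,\Delta_1$ redefined accordingly. Balancing is what makes the bridge-score corrections meaningful: they vanish under mediator ignorability and are projections of $\Delta_a^\ast$. It is not what makes the identity true.

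For the same reason, your remark that the $B(m)$-measurability of $f_0(m\mid X)$ is what legitimizes the tower step is misplaced. That measurability, together with balancing, is needed only in the paper's route, where conditioning is done under the unweighted law of $X$.

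In exchange, the paper's route yields the covariate-level formula. It also gives a term-by-term correspondence between each score-level quantity and its full-covariate counterpart.
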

The proof is provided in Supplementary Material Section \ref{proof:thm:bridge_identification}.
Theorem~\ref{thm:bridge_identification} places the identified component $\mu_1$ and the unidentified correction $(\Delta_0,\Delta_1)$ on the same bridge score scale. The two corrections have different roles. $\Delta_1$ adjusts the treated regression for conditioning on the observed mediator inside the treated arm, whereas $\Delta_0$ adjusts the corresponding transport under the control mediator law. The bridge score is useful for sensitivity analysis because it reduces the covariate dimension while retaining the information of the mediator law at each treatment arm, which determines where the natural effect functional places weight. Thus, calibration can remain local in $(m,b)$ without requiring a sensitivity function indexed by full covariates, making the sensitivity parameters easier to elicit, report, and compare across analyses while still allowing confounding strength to vary over the mediator distribution.

\subsection{Sharp sensitivity bounds based on the bridge score}
\label{sec:sharp_bound}
We next derive sharp bounds for the bridge score sensitivity functions. For bounding only, suppose there exists a latent scalar confounder $U$ such that the residual dependence of $Y(1,m)$ on the mediator within arm $a$ is explained by $(B(m),U)$.

\begin{assumption}[Latent bridge score representation for bounding]
\label{ass:bridge_bound_latent}
For each $a\in\{0,1\}$ and each $m\in\calM$, there exists a latent scalar random variable $U$ such that $Y(1,m)\indep M \mid A=a,\,B(m),\,U$, with $F_{U\mid A=a,M=m,B(m)=b} \ll F_{U\mid A=a,B(m)=b}$. For each $a\in\{0,1\}$ and each $(m,b,u)$ in the relevant support, let 
$\psi_a(u;m,b) = \E\{Y(1,m)\mid A=a,\,B(m)=b,\,U=u\}$
denote the arm-specific regression of the treated potential outcome on the latent confounder within a bridge score stratum.
\end{assumption}
\noindent
Assumption~\ref{ass:bridge_bound_latent} is used only to define a sensitivity bound. It does not alter the observed data model, and it does not require $U$ to be observed.
For $a\in\{0,1\}$ and $(m,b)$, define the latent density ratio
\[
h_a(u;m,b)
=
\frac{
dF_{U\mid A=a,M=m,B(m)=b}
}{
dF_{U\mid A=a,B(m)=b}
}(u).
\]
The bridge score sensitivity function can then be written as a covariance under the reduced law $U\mid A=a,B(m)=b$. Let
\begin{align}
\chi_a(m,b)
&=
\E\!\left[
\{h_a(U;m,b)-1\}^2
\mid A=a,B(m)=b
\right],
\label{eq:chi_local}\\
v_a(m,b)
&=
\Var\{\psi_a(U;m,b)\mid A=a,B(m)=b\}.
\label{eq:latent_mean_variance}
\end{align}
The parameter $\chi_a(m,b)$ is the chi-square divergence between the selected latent law $U\mid A=a,M=m,B(m)=b$ and the reduced latent law $U\mid A=a,B(m)=b$. It measures residual mediator-induced selection on the latent confounder. The parameter $v_a(m,b)$ measures the local second-moment outcome relevance of the latent confounder through variation in the $U$-specific mean of $Y(1,m)$.
For calibration it is often easier to elicit a likelihood-ratio cap than a chi-square divergence. Define
\begin{equation}
\gamma_a(m,b)
=
\esssup_{u\sim F_{U\mid A=a,B(m)=b}}
h_a(u;m,b)
\;\ge\; 1,
\label{eq:gamma_local}
\end{equation}
so that $0\le h_a(u;m,b)\le\gamma_a(m,b)$ under the reduced law, up to null sets. 
The parameter $\gamma_a(m,b)$ measures how much conditioning on the mediator event $M=m$ tilts the distribution of the latent confounder $U$ within arm $a$, after already conditioning on the bridge score $b$. In this sense, it quantifies the magnitude of mediator-induced residual selection on the latent confounder.

\begin{theorem}[Sharp bridge score variance bound]
\label{thm:sharp_bound}
Suppose Assumptions~\ref{ass:bridge_consistency}--\ref{ass:treatment_randomization} and~\ref{ass:bridge_bound_latent} hold. Then, for every $a\in\{0,1\}$ and every $(m,b)$ in the relevant support,
\begin{equation}
\bigl|\Delta_a(m,b)\bigr|
\le
\Xi_{a,\chi}(m,b)
:=
\sqrt{v_a(m,b)\chi_a(m,b)}.
\label{eq:sharp_l2_bound}
\end{equation}
For fixed conditional latent laws $F_{U\mid A=a,M=m,B(m)=b}$ and $F_{U\mid A=a,B(m)=b}$ and fixed $v_a(m,b)$, the bound is sharp among latent outcome regressions with the given second moment and no further range restriction. Moreover,
\begin{equation}
\chi_a(m,b)\le \gamma_a(m,b)-1,
\label{eq:chi_gamma_bound}
\end{equation}
and hence
\begin{equation}
\bigl|\Delta_a(m,b)\bigr|
\le
\Xi_{a,\gamma}(m,b)
:=
\sqrt{v_a(m,b)\{\gamma_a(m,b)-1\}}.
\label{eq:sharp_l2_gamma_bound}
\end{equation}
For finite $\gamma_a(m,b)$, the likelihood-ratio envelope \eqref{eq:sharp_l2_gamma_bound} is sharp over density ratios satisfying $0\le h_a(u;m,b)\le\gamma_a(m,b)$ almost surely under the reduced law and $\E\{h_a(U;m,b)\mid A=a,B(m)=b\}=1$, among latent outcome regressions with fixed $v_a(m,b)$ and no further range restriction.
\end{theorem}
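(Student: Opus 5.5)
The plan is to write $\Delta_a(m,b)$ as a covariance under the reduced law $U\mid A=a,B(m)=b$ and then apply Cauchy--Schwarz. Fix $a$ and $(m,b)$, and abbreviate expectations under the reduced law by $\E_r$. First, the conditional independence $Y(1,m)\indep M\mid A=a,B(m),U$ in Assumption~\ref{ass:bridge_bound_latent} gives, by iterated expectations,
\[
\E\{Y(1,m)\mid A=a,M=m,B(m)=b\}=\E\{\psi_a(U;m,b)\mid A=a,M=m,B(m)=b\}.
\]
Absolute continuity then lets us rewrite the right-hand side as $\E_r\{\psi_a(U)h_a(U)\}$. Similarly, $\E\{Y(1,m)\mid A=a,B(m)=b\}=\E_r\{\psi_a(U)\}$. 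Since $\E_r h_a=1$, subtracting gives
\[
\Delta_a(m,b)=\E_r\bigl[\{\psi_a(U)-\E_r\psi_a\}\{h_a(U)-1\}\bigr].
\]
Cauchy--Schwarz then yields $|\Delta_a|\le\sqrt{v_a\chi_a}$, which is \eqref{eq:sharp_l2_bound}. For the continuous-mediator case I would handle conditioning on $M=m$ through the density-ratio/disintegration interpretation already implicit in the definition of $h_a$; I take this as given.

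For sharpness at fixed latent laws (so fixed $h_a$) and fixed $v_a$, use the equality case of Cauchy--Schwarz. Choose
\[
\psi_a(u)=c+\sqrt{v_a/\chi_a}\,\{h_a(u)-1\},
\]
for any constant $c$. This function has variance exactly $v_a$ and attains $\Delta_a=\sqrt{v_a\chi_a}$; flipping the sign attains $-\sqrt{v_a\chi_a}$. If $\chi_a=0$, then $h_a=1$ a.s., so $\Delta_a=0$ and the bound is trivially attained. If $\chi_a=\infty$, the bound is vacuous. This $\psi_a$ is unrestricted in range, which matches the ``no further range restriction'' clause. I would also remark that $\psi_a$ can be realized as the regression of some $Y(1,m)$, for example $Y(1,m)=\psi_a(U)+\varepsilon$ with $\varepsilon$ independent, so the construction stays compatible with the assumed structure.

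For \eqref{eq:chi_gamma_bound}, note that $0\le h_a\le\gamma_a$ a.s. implies $(h_a-1)(h_a-\gamma_a)\le 0$, that is, $h_a^2\le(1+\gamma_a)h_a-\gamma_a$. Taking $\E_r$ and using $\E_r h_a=1$ gives $\E_r h_a^2\le 1+\gamma_a-\gamma_a=1$... this is wrong, so I redo it: $\E_r h_a^2\le(1+\gamma_a)-\gamma_a$ would mean $\E_r h_a^2\le 1$, which is false. The correct algebra is $h_a^2\le(1+\gamma_a)h_a-\gamma_a$ only when $h_a\in[1,\gamma_a]$. Instead I use $h_a^2\le\gamma_a h_a$ on $[0,\gamma_a]$, which gives $\E_r h_a^2\le\gamma_a$. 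Hence $\chi_a=\E_r h_a^2-1\le\gamma_a-1$, and \eqref{eq:sharp_l2_gamma_bound} follows.

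The main obstacle is sharpness of the $\gamma$-envelope: I need a density ratio with $\E_r h=1$, $0\le h\le\gamma_a$ and $\chi=\gamma_a-1$. Equality in $h^2\le\gamma_a h$ forces $h\in\{0,\gamma_a\}$ a.s. So I would take $h=\gamma_a\mathbbm{1}_S$ with $\PP_r(S)=1/\gamma_a$. This requires the reduced law of $U$ to admit such a set, for instance when it is nonatomic. Since the class in the statement ranges over all admissible density ratios (with $U$ latent and its law ours to choose), I would take $U$ uniform on $(0,1)$ under the reduced law and $S=(0,1/\gamma_a)$. Combining this $h$ with the Cauchy--Schwarz extremal $\psi_a$ from the second step gives $|\Delta_a|=\sqrt{v_a(\gamma_a-1)}$. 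If the latent laws were held fixed and atomic, only a supremum would be approached, and I would state that caveat.
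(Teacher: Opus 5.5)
Your proposal is correct and follows essentially the same route as the paper. Like the paper, you write $\Delta_a$ as a covariance under the reduced law and apply Cauchy--Schwarz, take $\psi_a=\mu+c(h_a-1)$ as the extremal regression, use $h_a^2\le\gamma_a h_a$ to get $\chi_a\le\gamma_a-1$, and use a uniform reduced law with $h=\gamma_a\mathbf{1}\{u\le 1/\gamma_a\}$ for sharpness of the likelihood-ratio envelope. In the write-up, delete the abandoned $(h_a-1)(h_a-\gamma_a)\le 0$ step, since that inequality fails on $\{h_a<1\}$; your extra remarks on $\chi_a=\infty$, on realizability via $Y(1,m)=\psi_a(U)+\varepsilon$, and on atomic reduced laws are harmless refinements the paper does not spell out.
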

The proof is provided in Supplementary Material Section \ref{proof:thm:sharp_bound}.
Theorem \ref{thm:sharp_bound} presents an analogue of the risk ratio scale bound of \citet{Ding2016_sharp}, adapted to the bridge score formulation and to the more general, mean difference scale.  It decomposes mediator-outcome confounding into an outcome-relevance component, $v_a(m,b)$, and a residual-selection component, $\chi_a(m,b)$ or its likelihood-ratio upper bound $\gamma_a(m,b)-1$. The form of the bound is useful for calibration because $v_a(m,b)$ is a variance of the latent conditional mean curve, and can therefore be scaled against local residual outcome variation. The likelihood-ratio parameter $\gamma_a(m,b)$ remains on the same interpretable selection scale used in sharp confounding bounds of \citet{Ding2016_sharp}. 

Here, sharpness is important because it characterizes the exact admissible range of the primitive sensitivity function $\Delta_a(m,b)$ at each stratum. For fixed $(m,b)$ and fixed values of $(v_a,\chi_a)$, the envelope in Theorem~\ref{thm:sharp_bound} cannot be narrowed without imposing additional restrictions. Specifically, the bound is sharp in the following pointwise sense: for every admissible pair $(v_a,\chi_a)$, there exists a latent confounder $U$ and a joint law of $(Y,M,U)$, consistent with the observed data law at that stratum, such that $|\Delta_a(m,b)|=\sqrt{v_a(m,b)\chi_a(m,b)}$.

For comparison and for optional support restrictions, it is also useful to record the complementary range-based envelope. Define the arm-specific outcome range parameter
\begin{equation}
\eta_a(m,b)
=
\esssup_{u\sim F_{U\mid A=a,B(m)=b}}\psi_a(u;m,b)
-
\essinf_{u\sim F_{U\mid A=a,B(m)=b}}\psi_a(u;m,b)
\;\ge\; 0.
\label{eq:eta_local}
\end{equation}
The parameter $\eta_a(m,b)$ measures the essential spread of the $U$-specific mean of $Y(1,m)$ within the same bridge score stratum. It is an $L^\infty$ outcome-relevance parameter, whereas $v_a(m,b)$ is an $L^2$ outcome-relevance parameter.

\begin{theorem}[Complementary sharp range bound]
\label{thm:range_bound}
Suppose Assumptions~\ref{ass:bridge_consistency}--\ref{ass:treatment_randomization} and~\ref{ass:bridge_bound_latent} hold. Then, for every $a\in\{0,1\}$ and every $(m,b)$ in the relevant support,
\begin{equation}
\bigl|\Delta_a(m,b)\bigr|
\le
\Xi_{a,\mathrm{R}}(m,b)
:=
\eta_a(m,b)\,\frac{\gamma_a(m,b)-1}{\gamma_a(m,b)},
\label{eq:sharp_additive_bound}
\end{equation}
with the convention that $(\gamma_a-1)/\gamma_a=1$ when $\gamma_a=\infty$. The bound is sharp when only the local pair $(\eta_a,\gamma_a)$ and the support admissibility restriction are fixed.
\end{theorem}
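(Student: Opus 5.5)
We first write $\Delta_a(m,b)$ as a covariance under the reduced latent law, exactly as for Theorem~\ref{thm:sharp_bound}. By Assumption~\ref{ass:bridge_bound_latent}, $Y(1,m)\indep M\mid A=a,B(m),U$. Hence $\E\{Y(1,m)\mid A=a,M=m,B(m)=b\}=\E\{\psi_a(U;m,b)\,h_a(U;m,b)\mid A=a,B(m)=b\}$. Subtracting $\E\{\psi_a(U;m,b)\mid A=a,B(m)=b\}$ gives
\[
\Delta_a(m,b)=\E\bigl[\psi_a(U;m,b)\{h_a(U;m,b)-1\}\mid A=a,B(m)=b\bigr].
\]
Below, write $\psi=\psi_a(\cdot;m,b)$, $h=h_a(\cdot;m,b)$ and $P$ for the reduced law. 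Since $\E_P(h-1)=0$, the constant in $\psi$ is irrelevant. We therefore replace $\psi$ by $\psi-c$ with $c=(\sup\psi+\inf\psi)/2$, the midrange over the essential support, so that $|\psi-c|\le\eta_a/2$ almost surely.

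For the upper bound, let $S=\{h>1\}$ and $D=\E_P(h-1)_+=\E_P(h-1)_-$; the two agree because $\E_P h=1$. Then
\[
|\Delta_a|=\bigl|\E_P[(\psi-c)(h-1)]\bigr|\le\frac{\eta_a}{2}\,\E_P|h-1|=\eta_a D .
\]
So it suffices to show $D\le(\gamma_a-1)/\gamma_a$. Since $0\le h\le\gamma_a$, we have $D=\E_P[(h-1)1_S]\le(\gamma_a-1)P(S)$. Also $D=\E_P[(1-h)1_{S^c}]\le P(S^c)=1-P(S)$. The first bound increases in $P(S)$ and the second decreases, so $D\le\min\{(\gamma_a-1)p,\,1-p\}$ with $p=P(S)$. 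Maximizing over $p\in[0,1]$ puts the optimum at $p=1/\gamma_a$, which yields $D\le(\gamma_a-1)/\gamma_a$. If $\gamma_a=\infty$, we only use $D\le 1$ (total variation is at most one), which matches the stated convention. An alternative route is to note that $\E_P[\psi(h-1)]$ is linear in $\psi$, so its extremum over $\psi$ with fixed range is attained at two-valued $\psi$; one then optimizes the mass shift. I will follow the midrange argument, since it is cleaner.

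For sharpness, fix $(\eta_a,\gamma_a)$ with $\gamma_a<\infty$. Take the reduced law with an atom or set $S$ of mass $1/\gamma_a$ (a continuous latent $U$, e.g.\ uniform, lets us choose any mass). Put $h=\gamma_a$ on $S$ and $h=0$ on $S^c$; then $\E_P h=1$ and $\operatorname{ess\,sup} h=\gamma_a$. Take $\psi=\eta_a 1_S$ plus any constant, so its range is exactly $\eta_a$. Then
\[
\Delta_a=\eta_a\bigl\{\gamma_a P(S)-P(S)\bigr\}=\eta_a(1-1/\gamma_a),
\]
and using $-\psi$ attains the negative sign. For $\gamma_a=\infty$, take $P(S)=\epsilon\to0$ and $h=1/\epsilon$ on $S$, which approaches $\eta_a$. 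In that case sharpness means supremum rather than attainment, and this should be stated.

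The main obstacle is making this construction consistent with the observed data law at the stratum, as "support admissibility" requires. I would handle it as in Theorem~\ref{thm:sharp_bound}. The observed quantities at $(a,m,b)$ constrain only $\E\{Y(1,m)\mid A=a,M=m,B(m)=b\}$, which must equal $\mu_1$ when $a=1$. We therefore choose the additive constant in $\psi$ so that $\E_P[\psi h]$ matches this observed mean. The conditional outcome law given $U$ remains free around $\psi$, subject only to the support restriction that the $\psi$ values lie within the outcome support. That restriction is exactly what the range $\eta_a$ encodes.
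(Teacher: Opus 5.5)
Your proof of the bound, and of its sharpness when the additive constant in $\psi_a$ is left free, is correct and essentially the paper's argument: the same covariance representation, the same total-variation estimate via $\min\{(\gamma_a-1)p,\,1-p\}\le(\gamma_a-1)/\gamma_a$, and the same indicator construction on a set of reduced-law mass $1/\gamma_a$; centering at the midrange instead of rescaling to $[0,1]$ is only a cosmetic difference. One caution on your last paragraph: the paper's sharpness claim deliberately does not fix the observed treated-arm mean, and in general it could not. At the upper endpoint $F_1$ is concentrated where $\psi_a=\overline\psi$, so matching $\mu_1$ forces $\underline\psi=\mu_1-\eta_a$, and support admissibility then requires $\eta_a\le\mu_1-L_Y$ (and $\eta_a\le U_Y-\mu_1$ at the lower endpoint), which is stronger than $\eta_a\le U_Y-L_Y$.
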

The proof is provided in Supplementary Material Section \ref{proof:thm:range_bound}.
Theorem~\ref{thm:range_bound} is not the calibration route used in Section \ref{sec:residual_budget_calibration}, because local residual variance does not identify or bound the essential range $\eta_a(m,b)$. Instead, the range bound supplies a complementary support-based restriction. In particular, if $Y$ has known support $[L_Y,U_Y]$, then $\eta_a(m,b)\le U_Y-L_Y$, and any residual budget envelope can be intersected with $(U_Y-L_Y)\{\gamma_a(m,b)-1\}/\gamma_a(m,b)$.

\subsection{Bridge score and full covariate sensitivity parameters}
\label{sec:bs_for_sharp_bound}
Theorem~\ref{thm:sharp_bound} is sharp conditional on the bridge score parameters. A separate question is whether those parameters should be elicited at the full covariate level or after compression to $B(m)$. The bridge score helps because it balances the observed covariates with respect to the mediator event. Consequently, residual selection parameters defined after conditioning on $B(m)$ can be bounded by their full covariate counterparts within the corresponding score stratum.

\begin{proposition}[bridge score sensitivity parameters and full covariate counterparts]
\label{prop:bridge_tightens}
Fix $a\in\{0,1\}$ and $m\in\calM$. Define the
sensitivity parameters at the covariate level
\begin{align*}
\gamma_a^\ast(m,x) &=
\esssup_{u\sim F_{U\mid A=a,X=x}}
h_a^\ast(u;m,x),\\
h_a^\ast(u;m,x)
&=
\frac{dF(u\mid A{=}a,M{=}m,X{=}x)}{dF(u\mid A{=}a,X{=}x)},\\
\chi_a^\ast(m,x)
&=
\E\!\left[
\{h_a^\ast(U;m,x)-1\}^2
\mid A=a,X=x
\right],\\
v_a^\ast(m,x)
&=
\Var\{\psi_a^\ast(U;m,x)\mid A=a,X=x\},
\end{align*}
where $\psi_a^\ast(u;m,x)=\E\{Y(1,m)\mid A=a,X=x,U=u\}$.
Then, without any additional independence condition,
\begin{align}
\gamma_a(m,b) 
&\le \esssup_{x:\,B(m,x)=b}\gamma_a^\ast(m,x),
\label{eq:gamma_tightens}\\
\chi_a(m,b)
&\le
\E\{\chi_a^\ast(m,X)\mid A=a,B(m)=b\}
\le
\esssup_{x:\,B(m,x)=b}\chi_a^\ast(m,x).
\label{eq:chi_tightens}
\end{align}

Additionally, assuming $X \indep U \mid A=a,B(m)=b$,
we have
\begin{equation}
v_a(m,b) \le \esssup_{x:\,B(m,x)=b}v_a^\ast(m,x).
\label{eq:v_tightens}
\end{equation}
For the complementary range bound, the same condition also gives
\begin{align*}
\eta_a(m,b)
&\le
\esssup_{x:\,B(m,x)=b}\eta_a^\ast(m,x),\\
\text{ where }
\eta_a^\ast(m,x)
&=
\esssup_{u\sim F_{U\mid A=a,X=x}}\psi_a^\ast(u;m,x)
-
\essinf_{u\sim F_{U\mid A=a,X=x}}\psi_a^\ast(u;m,x).
\end{align*}
The essential suprema over $x$ are taken with respect to the conditional law of $X$ given $A=a$ and $B(m)=b$.
\end{proposition}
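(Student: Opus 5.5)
The whole proof rests on one mixture representation. Conditional on $(A=a,B(m)=b)$, the latent density ratio $h_a$ is a conditional average of the covariate-level ratios $h_a^\ast$. Once that identity is in place, each inequality in the statement is a Jensen-type or sup/average comparison.

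\textbf{Step 1: the mixture identity.} I would first write the selected latent law as a mixture over $X$:
\[
dF(u\mid A=a,M=m,B(m)=b)=\int dF(u\mid A=a,M=m,X=x)\,dF(x\mid A=a,M=m,B(m)=b).
\]
By Lemma~\ref{lem:bridge_balancing}, the mixing law is $dF(x\mid A=a,B(m)=b)$. By the definition of $h_a^\ast$, the integrand is $h_a^\ast(u;m,x)\,dF(u\mid A=a,X=x)$. Since $B(m,X)$ is a function of $X$, the reduced law $dF(u\mid A=a,B(m)=b)$ is the same mixture of $dF(u\mid A=a,X=x)$ without the factor $h_a^\ast$. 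Dividing the two mixtures (a Radon--Nikodym computation on the joint law of $(X,U)$ given $A=a,B(m)=b$) gives
\[
h_a(u;m,b)=\E\{h_a^\ast(u;m,X)\mid A=a,B(m)=b,U=u\}
\]
for $F_{U\mid A=a,B(m)=b}$-almost every $u$.

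\textbf{Step 2: the selection bounds.} Bound \eqref{eq:gamma_tightens} follows because a conditional average never exceeds the essential supremum of its integrand. Null sets need care here: a $u$-null set under $F_{U\mid A=a,X=x}$, integrated over $x$, is null under the joint law and hence under the reduced law. For \eqref{eq:chi_tightens}, I would start from
\[
\chi_a(m,b)=\E[\{\E(h_a^\ast-1\mid U)\}^2\mid A=a,B(m)=b].
\]
Conditional Jensen bounds this by $\E[\{h_a^\ast(U;m,X)-1\}^2\mid A=a,B(m)=b]$. The tower property, conditioning first on $X$ (which fixes $B(m)$), turns this into $\E\{\chi_a^\ast(m,X)\mid A=a,B(m)=b\}$, and that is at most the essential supremum over the fiber.

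\textbf{Step 3: the outcome-relevance bounds.} For \eqref{eq:v_tightens}, I would use $X\indep U\mid A=a,B(m)=b$ in two ways:
\[
\psi_a(u;m,b)=\int \psi_a^\ast(u;m,x)\,dF(x\mid A=a,B(m)=b),
\]
and $F_{U\mid A=a,X=x}=F_{U\mid A=a,B(m)=b}$ for $x$ in the fiber. Then $\psi_a$ is an $x$-average of the functions $\psi_a^\ast(\cdot;m,x)$, all taken under a common $U$-law. Convexity of variance (Cauchy--Schwarz applied to the centered versions) gives $v_a\le \E\{v_a^\ast(m,X)\mid A=a,B(m)=b\}\le\esssup v_a^\ast$. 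For the range bound, the supremum of an average is at most the average of the suprema, and likewise for infima. This gives $\eta_a\le \E\{\eta_a^\ast\mid A=a,B(m)=b\}$, and the essential suprema coincide because the $U$-laws are common.

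\textbf{Main obstacle.} I expect the hardest step to be the rigorous justification of Step 1. The difficulty is that for a continuous mediator, conditioning on the null event $M=m$ jointly with $B(m)$ requires regular conditional distributions and a careful Radon--Nikodym argument. That is where the absolute-continuity condition in Assumption~\ref{ass:bridge_bound_latent} and the balancing lemma enter.
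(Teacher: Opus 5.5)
Your proposal is correct and follows essentially the same route as the paper: the balancing lemma supplies a common mixing law over the fiber, the selection bounds come from comparing fiber averages with essential suprema, and the outcome-relevance bounds use $X\indep U\mid A=a,B(m)=b$ to write $\psi_a$ as an average of $\psi_a^\ast$ under a common $U$-law, followed by Jensen. The differences are cosmetic. You derive the explicit identity $h_a=\E\{h_a^\ast\mid A=a,B(m)=b,U\}$ and prove the chi-square comparison by conditional Jensen, which is the data-processing inequality proved inline where the paper cites it. You also bound $\gamma_a$ through that conditional expectation, whereas the paper integrates the setwise inequality $F(C\mid A=a,M=m,X=x)\le\bar\gamma_a F(C\mid A=a,X=x)$.
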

The proof is provided in Supplementary Material Section \ref{proof:prop:bridge_tightens}.
When the analyst specifies uniform scalar sensitivity parameters, which is the standard mode of use, doing so at the bridge score level avoids carrying worst-case full covariate sensitivity parameters through the analysis. For residual selection, both the likelihood-ratio parameter $\gamma_a$ and the chi-square parameter $\chi_a$ are controlled by their full covariate counterparts within a bridge score stratum. The improvement is largest when there is substantial heterogeneity in $\gamma_a^\ast$, $\chi_a^\ast$, or $v_a^\ast$ across $x$ values that are mapped to the same bridge score, i.e., when the covariates contain information about $M$ that is redundant once the mediator densities $(f_0(m\mid x), f_1(m\mid x))$ are known. 

\begin{remark}[On the assumption $X\indep U\mid A,B(m)$]
\label{sec:x_indep_u}
The inequalities for the outcome-relevance parameters require the additional assumption $X\indep U\mid A=a,B(m)=b$, 
which asserts that observed and unobserved confounders decouple within a bridge score stratum. This assumption is not implied by the balancing property, because the balancing property governs the joint law of $(X,M)$ given $(A,B(m))$ but says nothing about $U$. 
It is plausible when the unmeasured confounder acts through mechanisms orthogonal to the observed covariates, for instance, a latent biological or behavioral trait that is not captured by the baseline $X$, and becomes less plausible when $U$ is strongly correlated with components of $X$, even after adjusting for the assignment and the bridge score. In practice, this asymmetry means that residual selection parameters specified at the bridge score level are controlled by their full covariate counterparts, while $v_a$ and $\eta_a$ require an additional decoupling condition.
\end{remark}

\section{Operational calibration of the sharp sensitivity bound}
\label{sec:calibration_delta}
Theorem~\ref{thm:sharp_bound} separates mediator-outcome confounding into two components: outcome relevance of the latent confounder, measured by the latent mean variance $v_a(m,b)$, and residual selection on the latent confounder, measured by $\chi_a(m,b)$ or by its likelihood-ratio upper bound $\gamma_a(m,b)-1$. These latent quantities are not identified by the observed data alone. We therefore calibrate the sensitivity envelope using observed-data residual variation and treat calibration as a structured sensitivity input rather than an estimator of an unknown identified quantity. In what follows, we keep residual selection on the likelihood-ratio scale through $\gamma_a(m,b)$, and we calibrate $v_a(m,b)$ against local residual outcome variation.

\subsection{Residual variance budget calibration}
\label{sec:residual_budget_calibration}
We calibrate the latent variance $v_a(m,b)$ through the amount of local conditional mean heterogeneity that the latent confounder could induce in the $U$-specific regression of $Y(1,m)$. A residual budget therefore anchors this outcome-relevance component to local residual outcome variation, in the spirit of residual budget analyses \citep[e.g.,][]{Daniels_Linero_Roy_2023}. Let
$$q_{1,MB}(m,b)=\E\{Y\mid A=1,M=m,B(M)=b\}$$ 
and define the local residual scale
\begin{equation}
\sigma_{\mathrm{res}}^2(m,b)
=
\Var\{Y\mid A=1,M=m,B(M)=b\}
=
\E\!\left[
\{Y-q_{1,MB}(m,b)\}^2
\mid A=1,M=m,B(M)=b
\right].
\label{eq:sigma_residual}
\end{equation}
Conditioning on $(M,B(M))$ keeps the calibration on the same local scale as the bridge score sensitivity function $\Delta_a(m,b)$ and the latent outcome regression $\psi_a(u;m,b)$.
The same treated-arm residual scale is used for both arms because both $v_0(m,b)$ and $v_1(m,b)$ describe latent conditional mean heterogeneity for the same treated potential outcome $Y(1,m)$. We impose the residual variance budget restriction
\begin{equation}
v_a(m,b)
\le
\sigma_{\mathrm{res}}^2(m,b)k_a,
\qquad a\in\{0,1\},
\label{eq:variance_budget_calibration}
\end{equation}
for a user-specified $k_a\in[0,1]$. Thus $k_a$ is a $0$-to-$1$ dial for the maximum share of local treated-arm residual variance allocated to latent conditional mean heterogeneity. The boundary $k_a=1$ allows this heterogeneity to use the full local residual variance budget, while $k_a=0$ removes the outcome-relevance component of mediator-outcome confounding. Intermediate values can be read as allowing the latent conditional mean curve to explain at most a fraction $k_a$ of the local residual outcome variation under the calibration.
The intuition is the usual variance-decomposition logic. After conditioning on the observed local information $(M=m,B(M)=b)$, $\sigma_{\mathrm{res}}^2(m,b)$ is the remaining treated-arm outcome variation. If an unmeasured mediator-outcome confounder is outcome-relevant, its contribution should be calibrated against this remaining local variation. Let $\mathcal C_{mb}$ denote the event $\{A=1,M=m,B(M)=b\}$. Were $U$ observed in the treated arm, one could decompose
\[
\Var(Y\mid \mathcal C_{mb})
=
\Var\!\left\{\E(Y\mid \mathcal C_{mb},U)\mid \mathcal C_{mb}\right\}
+
\E\!\left\{\Var(Y\mid \mathcal C_{mb},U)\mid \mathcal C_{mb}\right\}.
\]
The first term is the observed-data analogue of the latent mean-variance component in \eqref{eq:latent_mean_variance}: it is the amount of conditional mean variation attributable to $U$ within the local stratum. Restriction \eqref{eq:variance_budget_calibration} therefore says that the unobserved analogue may use at most a fraction $k_a$ of the local residual outcome variance. This restriction is not an estimate of $v_a(m,b)$, but is a calibration assumption that places the unidentified outcome-relevance component on an interpretable scale, namely the variation left unexplained after conditioning on the mediator and bridge score.

For residual selection, introduce a user-specified likelihood-ratio grid value $g_a\ge 1$ and
\begin{equation}
\gamma_a(m,b)\le g_a,
\qquad a\in\{0,1\}.
\label{eq:gamma_grid_calibration}
\end{equation}
Values such as $g_a \in \{1.1,\,1.25,\,1.5,\,2,\,3\}$ parallel the multiplicative selection ratio scale in sensitivity analysis \citep{Ding2016_sharp,VanderWeele2014_effect_decomposition}. $g_a$ is the largest plausible tilt in the latent confounder density induced by conditioning on $M=m$ after conditioning on the bridge score. The choice $g_a=1$ enforces $\gamma_a(m,b)\equiv 1$ and recovers $\Delta_a(m,b)=0$, the sequential ignorability anchor.
Combining Theorem~\ref{thm:sharp_bound} with \eqref{eq:variance_budget_calibration} and \eqref{eq:gamma_grid_calibration} yields the residual budget envelope
\begin{equation}
\bigl|\Delta_a(m,b)\bigr|
\le
\Xi_{a,\mathrm{RB}}(m,b;k_a,g_a)
:=
\widehat\sigma_{\mathrm{res}}(m,b)\sqrt{k_a(g_a-1)},
\qquad a\in\{0,1\}.
\label{eq:bound_residualbudget}
\end{equation}
Here $(k_a,g_a)$ are the analyst's sensitivity parameters, and $\widehat\sigma_{\mathrm{res}}(m,b)$ is the predicted residual standard deviation from a heteroscedastic working model for \eqref{eq:sigma_residual}, for instance a log-linear residual-variance regression fitted after a working mean model for $q_{1,MB}$. If the outcome support $[L_Y,U_Y]$ is known, the envelope may also be intersected with the support-implied range envelope $(U_Y-L_Y)(g_a-1)/g_a$ from Theorem~\ref{thm:range_bound}. Thus the primary operational bound is the sharp variance bound in Theorem~\ref{thm:sharp_bound} calibrated through \eqref{eq:variance_budget_calibration}, with the range bound serving as an optional admissibility restriction when outcome support information is available.

Table~\ref{tab:calibration_guide} summarizes the primitive sensitivity quantities, user-specified calibration inputs, and observed-data scale quantities used in the operational calibration.

\begin{table}[t]
\centering
\caption{Sensitivity quantities and their operational calibration.}
\label{tab:calibration_guide}
\small
\begin{adjustbox}{width=\textwidth}
\begin{tabular}{@{}>{\raggedright\arraybackslash}p{0.15\textwidth}>{\raggedright\arraybackslash}p{0.28\textwidth}>{\raggedright\arraybackslash}p{0.15\textwidth}>{\raggedright\arraybackslash}p{0.40\textwidth}@{}}
\toprule
Quantity & Role & Status & Operational use \\
\midrule
$v_a(m,b)$ & Latent outcome relevance, measured by variation in the $U$-specific mean of $Y(1,m)$ & Unidentified & Calibrate by $v_a(m,b)\le k_a\sigma_{\mathrm{res}}^2(m,b)$. \\
$\chi_a(m,b)$ & Residual selection on the chi-square divergence scale & Unidentified & Bounded by $\gamma_a(m,b)-1$; primarily a primitive quantity for the sharp variance envelope. \\
$\gamma_a(m,b)$ & Residual selection on a likelihood-ratio scale & Unidentified & Restricted by a user-specified cap $g_a\ge1$. \\
$\eta_a(m,b)$ & Latent conditional-mean range & Unidentified & Optional support-based cap when outcome support is known. \\
$k_a$ & Residual variance budget for latent mean heterogeneity & User-specified & Fraction of local treated-arm residual variance that may be attributed to latent outcome relevance. \\
$g_a$ & Maximum mediator-induced latent density tilt & User-specified & Multiplicative cap on how much conditioning on $M=m$ may tilt the latent confounder law after conditioning on $B(m)$. \\
$\sigma_{\mathrm{res}}(m,b)$ & Local treated-arm residual outcome scale & Modeled from observed data & Anchors the outcome-relevance component to variation left after conditioning on $(M,B(M))$. \\
$\bar{\Xi}_a$ & Aggregated sensitivity envelope & Induced by calibration & Reported admissible scalar bound for $\bar{\Delta}_a$ before prior-weighted posterior summaries. \\
\bottomrule
\end{tabular}
\end{adjustbox}
\end{table}

\subsection{Operationalizing the sensitivity analysis}
\label{sec:operation_SA}
The sharp bridge score variance bound and the residual budget calibration provide a general foundation for sensitivity analysis in causal mediation settings, without restricting the inferential paradigm. As a concrete example, this section gives a Bayesian g-computation implementation that produces posterior sensitivity summaries by combining observed-data posterior uncertainty in nuisance quantities with analyst-specified uncertainty over the calibrated sensitivity corrections \citep{FanLi2023}. We first record the scalar reduction that makes the implementation finite-dimensional.

\begin{corollary}[Scalar functional reduction of the bridge score identification]
\label{cor:scalar_reduction}
Define the sequential ignorability anchor
\begin{equation}
\theta_{\mathrm{SI}}
:=
\E\!\left[
\int_{\calM}
\mu_1\bigl(m,B(m)\bigr)\,
f_0(m\mid X)\,
\nu(dm)
\right],
\label{eq:theta_SI_def}
\end{equation}
and the arm-specific aggregated sensitivity functionals
\begin{equation}
\bar{\Delta}_a
:=
\E\!\left[
\int_{\calM}
\Delta_a\bigl(m,B(m)\bigr)\,
f_0(m\mid X)\,
\nu(dm)
\right],
\qquad
a\in\{0,1\}.
\label{eq:delta_bar}
\end{equation}
Then, under Assumptions~\ref{ass:bridge_consistency}--\ref{ass:treatment_randomization},
\begin{equation}
\theta
=
\theta_{\mathrm{SI}}
+
\bar{\Delta}_0
-
\bar{\Delta}_1,
\label{eq:theta_scalar}
\end{equation}
so $\theta$ depends on the bridge score sensitivity functions through the arm-specific scalars $(\bar{\Delta}_0,\bar{\Delta}_1)$.
If, in addition, Assumption~\ref{ass:bridge_bound_latent} holds and a pointwise envelope $\Xi_a(m,b)$ satisfies $|\Delta_a(m,b)|\le \Xi_a(m,b)$, then
\begin{equation}
\bigl|\bar{\Delta}_a\bigr|
\le
\bar{\Xi}_a
:=
\E\!\left[
\int_{\calM}
\Xi_a\bigl(m,B(m)\bigr)\,
f_0(m\mid X)\,
\nu(dm)
\right],
\label{eq:xi_bar}
\end{equation}
Moreover, \( |\bar{\Delta}_a|\le\bar{\Xi}_a \) is sharp over the rectangular class, where \(\Delta_a(m,b)\) is constrained only by \( |\Delta_a(m,b)|\le\Xi_a(m,b) \) pointwise. Consequently, if \(\bar{\Delta}_0\) and \(\bar{\Delta}_1\) vary independently within their rectangular classes, then
\begin{equation}
\theta\in
\left[
\theta_{\mathrm{SI}}-\bar\Xi_0-\bar\Xi_1,
\theta_{\mathrm{SI}}+\bar\Xi_0+\bar\Xi_1
\right]
\label{eq:theta_aggregate_sharp_interval}
\end{equation}
is sharp over the induced aggregate sensitivity class.
Finally, $\bar{\Xi}_a \le \esssup_{(m,b)}\Xi_a(m,b)$, with equality only when $\Xi_a\{m,B(m)\}$ attains this essential supremum $f_0$-almost surely on the support induced by $f_0(\cdot\mid X)$.
\end{corollary}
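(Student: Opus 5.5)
The decomposition \eqref{eq:theta_scalar} follows from Theorem~\ref{thm:bridge_identification} and linearity. The integrand in \eqref{eq:bridge_theta_identified} is $\{\mu_1-\Delta_1+\Delta_0\}f_0$, so I will split the outer expectation of the $\nu$-integral into three pieces. These pieces are exactly $\theta_{\mathrm{SI}}$, $-\bar\Delta_1$ and $\bar\Delta_0$ as defined in \eqref{eq:theta_SI_def} and \eqref{eq:delta_bar}. The only point to check is that each piece is finite, so that the split is legitimate. I will assume the integrability implicit in the definitions (e.g.\ $\E|Y(1,m)|$ integrable against $f_0$), which makes each term absolutely integrable by Fubini--Tonelli.

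For the aggregate envelope \eqref{eq:xi_bar}, I bring the absolute value inside:
\[
|\bar\Delta_a| \le \E\Bigl[\int_{\calM} |\Delta_a(m,B(m))|\,f_0(m\mid X)\,\nu(dm)\Bigr].
\]
This uses $f_0\ge 0$ and Jensen's inequality (or just the triangle inequality for integrals). I then apply the pointwise bound $|\Delta_a|\le \Xi_a$; in particular Theorem~\ref{thm:sharp_bound} supplies such envelopes.

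For sharpness over the rectangular class, I take the extremal choice $\Delta_a^{\pm}(m,b)=\pm\,\Xi_a(m,b)$. This function is measurable, satisfies the pointwise constraint with equality, and gives $\bar\Delta_a=\pm\bar\Xi_a$ exactly. Every intermediate value is then attained by $t\,\Xi_a$ with $t\in[-1,1]$, so the admissible set of $\bar\Delta_a$ is exactly $[-\bar\Xi_a,\bar\Xi_a]$. If $\bar\Xi_a=\infty$, I use truncations $\Xi_a\wedge n$ and monotone convergence to get supremum $\infty$. For the interval \eqref{eq:theta_aggregate_sharp_interval}, independence of the two rectangular classes means the set of attainable $(\bar\Delta_0,\bar\Delta_1)$ is the product $[-\bar\Xi_0,\bar\Xi_0]\times[-\bar\Xi_1,\bar\Xi_1]$. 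The map $(\bar\Delta_0,\bar\Delta_1)\mapsto\theta_{\mathrm{SI}}+\bar\Delta_0-\bar\Delta_1$ sends this rectangle onto the stated interval, with endpoints attained at $(\pm\bar\Xi_0,\mp\bar\Xi_1)$.

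The main obstacle is being honest about what ``sharp'' means here. It means sharp over the rectangular class, not over latent structural models. I will state explicitly that I do not claim each $\pm\Xi_a$ function is realizable by a single coherent latent law across strata. Realizing the pointwise extremal structures of Theorem~\ref{thm:sharp_bound} simultaneously in all strata would require a measurable-selection argument, and the statement deliberately avoids that.

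For the last claim, $\int f_0(m\mid X)\,\nu(dm)=1$, so $(X,m)\mapsto \E[\cdot\,f_0]$ defines a probability measure $P_0$ on $(X,m)$. Under $P_0$, $\bar\Xi_a=\E_{P_0}[\Xi_a(m,B(m,X))]$, which is at most the essential supremum of $\Xi_a$ under $P_0$. Equality of a mean with the essential supremum forces $\Xi_a=\esssup$ $P_0$-almost surely, because the nonnegative gap has zero integral. This is the stated $f_0$-almost-sure condition. When the supremum is infinite the statement is trivially consistent.
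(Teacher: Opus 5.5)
Your proposal is correct and follows the paper's proof essentially step for step. It uses linearity on the identification formula of Theorem~\ref{thm:bridge_identification}, the triangle inequality with the pointwise envelope, the extremal choices $\Delta_a=\pm\Xi_a$ (with $(\Delta_0,\Delta_1)=(\Xi_0,-\Xi_1)$ and $(-\Xi_0,\Xi_1)$ for the interval endpoints), and monotonicity of the $f_0$-weighted average, while your $t\,\Xi_a$ interpolation and explicit measure $P_0$ are harmless refinements. One small correction: the case $\esssup\Xi_a=\infty$ is not ``trivially consistent,'' because $\bar\Xi_a=\infty$ can occur with $\Xi_a$ finite everywhere, so the equality characterization tacitly requires $\esssup\Xi_a<\infty$, an assumption the paper's proof also leaves implicit.
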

The proof is provided in Supplementary Material Section \ref{proof:cor:scalar_reduction}.
Corollary~\ref{cor:scalar_reduction} records a complementary scalar reduction.
The estimand $\theta$ enters the identification formula through the function $\Delta_a(\cdot,\cdot)$ only via the arm-specific aggregated scalar $\bar{\Delta}_a$. This scalar collapse is a property of the $f_0$-weighted integration in \eqref{eq:theta_scalar} and would hold for a covariate version of the identification formula as well. Two practical consequences follow. First, the sensitivity module needs only a finite-dimensional prior on the net correction $\bar{\Delta}=\bar{\Delta}_0-\bar{\Delta}_1$, or equivalently on $(\bar{\Delta}_0,\bar{\Delta}_1)$, rather than a process prior over the function $\Delta_a(\cdot,\cdot)$. 
Second, over the rectangular sensitivity function class, in which $\Delta_a(m,b)$ is allowed to vary across strata subject only to the pointwise envelope $|\Delta_a(m,b)|\le \Xi_a(m,b)$, averaging gives a sharp aggregate envelope rather than merely a conservative supremum bound. The extremal aggregate corrections are attained by taking $\Delta_a(m,b)=\Xi_a(m,b)$ or $\Delta_a(m,b)=-\Xi_a(m,b)$ $f_0$-almost surely on the integration support. 
Moreover, the inequality $\bar{\Xi}_a\le\operatorname*{ess\,sup}_{(m,b)}\Xi_a(m,b)$ is strict unless the pointwise envelope attains its essential supremum $f_0$-almost surely on the integration support. 
The rectangular class implies that if the analyst additionally imposes a common latent structural model, bounded-support compatibility, or dependence restrictions linking $\Delta_0$ and $\Delta_1$, the aggregate interval in \eqref{eq:theta_aggregate_sharp_interval} may be further tightened.

The identification formula in Theorem~\ref{thm:bridge_identification} and Corollary~\ref{cor:scalar_reduction} motivate the following Bayesian g-computation algorithm. Let $\Theta_{\mathrm{obs}}$ denote the collection of parameters indexing the observed data model. At each MCMC iteration $t=1,\dots,T$,
\begin{enumerate}
    \item Draw $\Theta_{\mathrm{obs}}^{(t)}$ from the posterior of the observed data model. This draw induces mediator densities $f_0^{(t)}$ and $f_1^{(t)}$, the treated outcome regression $\mu_1^{(t)}$, and, when natural effects are reported, observed world mean draws $\delta_0^{(t)}$ and $\delta_1^{(t)}$.
    \item For each unit $i=1,\ldots,n$, sample $L$ mediator draws $M_{i0}^{(t,\ell)}$ from $f_0^{(t)}(\cdot\mid X_i)$, $\ell=1,\ldots,L$, and compute $B_{i0}^{(t,\ell)} = \left\{f_0^{(t)}\bigl(M_{i0}^{(t,\ell)}\mid X_i\bigr), f_1^{(t)}\bigl(M_{i0}^{(t,\ell)}\mid X_i\bigr)\right\}$.
    \item For each $(i,\ell)$ and arm $a$, compute the pointwise calibrated envelope $\Xi_{a,\mathrm{RB}}^{(t)}\bigl(M_{i0}^{(t,\ell)},B_{i0}^{(t,\ell)};k_a,g_a\bigr)$ using residual budget calibration \eqref{eq:bound_residualbudget}.
    \item Approximate the scalar envelope in \eqref{eq:xi_bar} by
    \begin{align*}
        \bar{\Xi}_a^{(t)}
        \approx
        \frac{1}{nL}\sum_{i=1}^{n}\sum_{\ell=1}^{L}
        \Xi_{a,\mathrm{RB}}^{(t)}\bigl(M_{i0}^{(t,\ell)},B_{i0}^{(t,\ell)};k_a,g_a\bigr),
    \end{align*}
    where the outer expectation over $X$ is approximated by the empirical mean \citep{FanLi2023}.
    \item Draw the scalar sensitivity function $\bar{\Delta}_a^{(t)}$ from a user-specified sensitivity prior supported on $[-\bar{\Xi}_{a}^{(t)},\bar{\Xi}_{a}^{(t)}]$, such as a uniform prior or a shifted-scaled beta prior. 
    \item By Corollary~\ref{cor:scalar_reduction}, the posterior sensitivity draw of the mediated mean is
    \begin{equation*}
        \theta^{(t)}
        =
        \theta_{\mathrm{SI}}^{(t)}
        +\bar{\Delta}_0^{(t)} - \bar{\Delta}_1^{(t)},
    \end{equation*}
    where $\theta_{\mathrm{SI}}^{(t)} = (nL)^{-1}\sum_{i=1}^n\sum_{\ell=1}^L \mu_1^{(t)}\bigl(M_{i0}^{(t,\ell)},B_{i0}^{(t,\ell)}\bigr)$ is the Monte Carlo sequential ignorability anchor from \eqref{eq:theta_SI_def}. The natural effect draws are $\mathrm{NIE}^{(t)}=\delta_1^{(t)}-\theta^{(t)}$ and $\mathrm{NDE}^{(t)}=\theta^{(t)}-\delta_0^{(t)}$.
\end{enumerate}

The draws produced by this algorithm are posterior sensitivity summaries rather than ordinary posterior draws from an identified likelihood. The observed-data posterior updates the nuisance components that determine $\theta_{\mathrm{SI}}$ and $\bar\Xi_a$, whereas the sensitivity prior for $(\bar\Delta_0,\bar\Delta_1)$ is not updated by the likelihood because these corrections are unidentified without additional assumptions. We therefore recommend reporting the admissible envelope implied by $\bar\Xi_a$ before reporting prior-weighted posterior summaries.

\section{Illustration with immigration framing data}
\label{sec:framing_illustration}

We illustrate the calibrated pointwise bridge score variance envelope in the context of causal mediation analysis, where the risk ratio scale is not the natural reporting scale, and show how inference can be carried out directly from the derived sensitivity bounds. The data are the Brader, Valentino and Suhay immigration framing experiment \citep{BraderValentinoSuhay2008}, distributed as the \texttt{framing} data set in the \texttt{mediation} R package, with 265 respondents. We take $A=\texttt{treat}$, the product of the negative tone and Latino immigrant framing indicators used to mark the experimentally salient treatment cell, as the treatment; $M=\texttt{emo}$, a negative feeling score measured during the experiment, as the mediator; $Y=\texttt{p\_harm}$, perceived harm from increased immigration on a 2-8 scale, as the outcome; and age, education, gender and income as baseline covariates $X$. In this application the natural indirect effect, $\mathrm{NIE}=E\{Y(1,M(1))-Y(1,M(0))\}$, is the part of the framing effect on perceived immigration harm operating through treatment-induced changes in emotional response, whereas the natural direct effect, $\mathrm{NDE}=E\{Y(1,M(0))-Y(0,M(0))\}$, is the remaining effect of the frame with the emotional response distribution held at its control value.

For inference, we use the bridge score specification with linear working models. The mediator model is $M\mid A,X\sim N(x_M^\T\beta_M,\sigma_M^2)$ with regressors $(1,A,X)$. Given the fitted mediator laws, let $\ell_a(m,x)=\log f_a(m\mid x)$ for $a=0,1$. The outcome mean and variance models are $Y\mid A,M,B(M) \sim N\{x_Y^\T\beta_Y,\exp(z_Y^\T\alpha_Y)\}$, $x_Y=\{1,M,A,\ell_0,\ell_1,M\ell_0,M\ell_1,\ell_0\ell_1\}$, $z_Y=(1,M,A,\ell_0,\ell_1)$.
The variance design is deliberately simpler than the mean design. For each simulation, we first draw $(\beta_M,\sigma_M^2)$ from the conjugate mediator posterior and then, conditional on the mediator-law draw and resulting bridge score design, draw $(\beta_Y,\alpha_Y)$ from the heteroscedastic outcome posterior using a weighted Gaussian update for $\beta_Y$ and a random-walk Metropolis update for $\alpha_Y$. The fitted mediator laws produce the log bridge scores at the observed and counterfactual mediator values, and the outcome mean model gives draws of $\delta_0$, $\delta_1$, and $\theta_{\mathrm{SI}}$. We then perform the Bayesian g-computation approach detailed in Section \ref{sec:operation_SA} to obtain the draw of $\theta$.

For the sensitivity layer, we use the residual budget calibration in Section~\ref{sec:residual_budget_calibration}. The calibration imposes $v_a(m,b)\le k_a\sigma_{\mathrm{res}}^2(m,b)$ and varies the residual selection grid value $g_a\ge1$, giving the pointwise envelope $\Xi_{a,\mathrm{RB}}(m,b;k_a,g_a)=\widehat\sigma_{\mathrm{res}}(m,b)\sqrt{k_a(g_a-1)}$. In this heteroscedastic implementation, $\widehat\sigma_{\mathrm{res}}^{(t)}(m,b)=\exp\{z_Y(m,b)^\T\alpha_Y^{(t)}/2\}$ is evaluated at each counterfactual $M(0)$ Monte Carlo draw and then averaged, so the reported aggregate envelope preserves the pointwise residual-scale calibration. Because $Y=\texttt{p\_harm}$ is bounded on the 2--8 scale, we also apply the support-implied range envelope $6(g_a-1)/g_a$ from Theorem~\ref{thm:range_bound} pointwise before averaging; this cap is active in some posterior draws and sparse bridge score regions. Thus $k_a$ controls the share of local residual outcome variation allocated to latent conditional mean heterogeneity, and $g_a$ controls the largest density tilt of the latent confounder induced by conditioning on the mediator after conditioning on the bridge score. For display, we set $k_0=k_1=k$ and $g_0=g_1=g$.

\begin{figure}[t]
\centering
\includegraphics[width=0.82\textwidth]{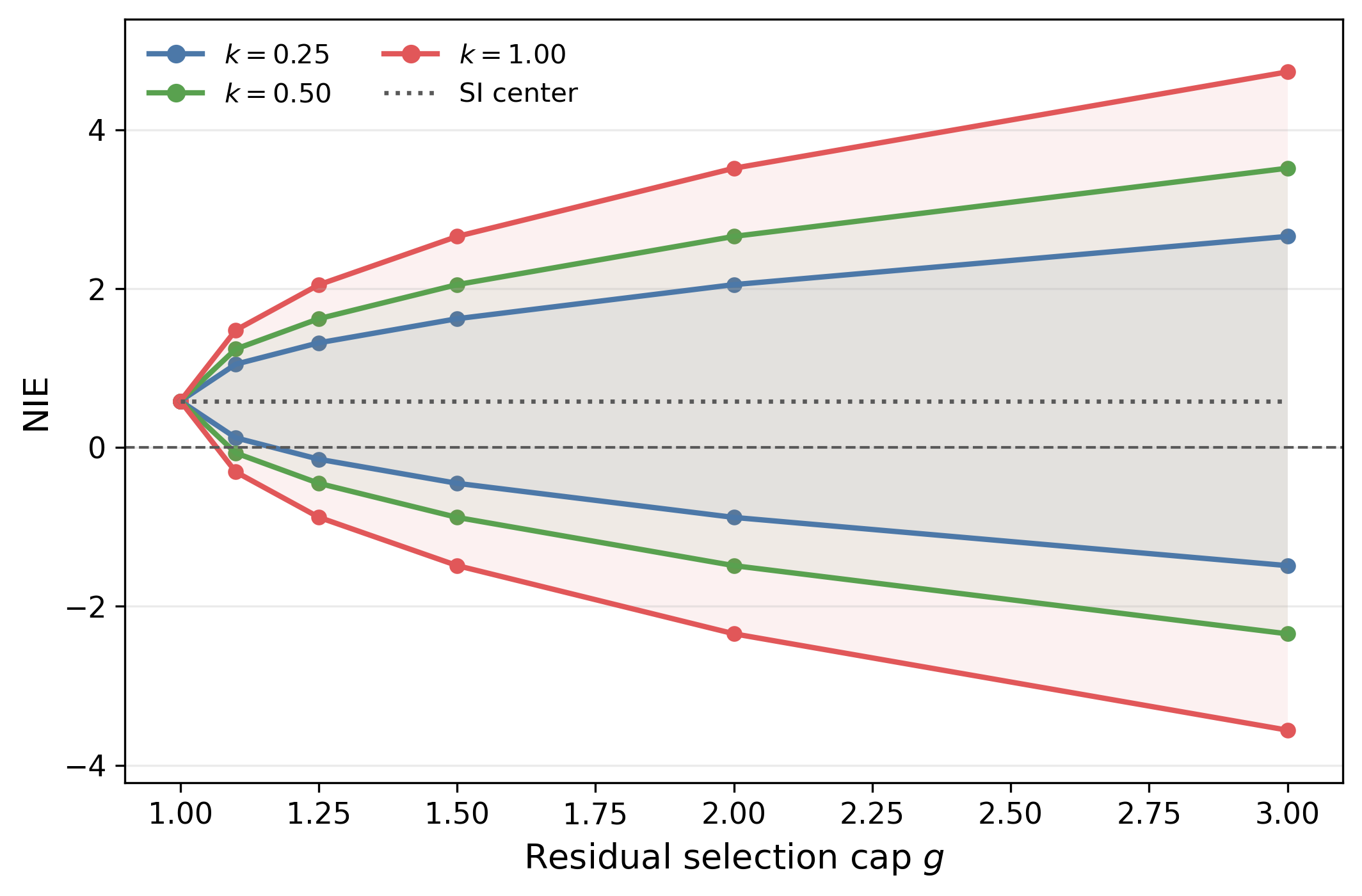}
\caption{Residual budget sensitivity analysis for the immigration-framing illustration. Curves are directional lower and upper NIE envelopes over the residual selection grid value $g$, overlaying $k\in\{0.25,0.50,1.00\}$. Shaded regions mark admissible intervals; the dotted line marks the sequential ignorability center.}
\label{fig:framing_residual_envelope}
\end{figure}

Figure~\ref{fig:framing_residual_envelope} presents the residual budget sensitivity sweep. The sequential-ignorability posterior mean NIE is about $0.59$, and the posterior mean of the averaged pointwise residual scale used in the calibration is about $1.47$ with a 95\% posterior interval of $[1.19,1.78]$. Increasing $k$ expands the allowed local latent mean-variance budget and increasing $g$ increases the permitted residual selection, so the NIE envelope widens monotonically in both dimensions. The envelope collapses to the sequential ignorability center at $g=1$, and wider residual budgets allow the lower endpoint to include zero at smaller values of $g$. In the displayed grid, the near-null value $g=1.1$ leaves the posterior mean lower envelope positive for $k=0.25$, but below zero for $k=0.50$ and $k=1.00$; by $g=1.25$, the lower envelope is below zero for all three values of $k$. These patterns are mechanically implied by the residual budget envelope, but they are useful diagnostics: they show how the robustness conclusion changes as the analyst allocates more local residual outcome variation and residual mediator-induced selection to possible unmeasured mediator-outcome confounding.

\section{Discussion}\label{sec:discussion}

This paper develops a sensitivity analysis framework for mediator-outcome ignorability that operates on the additive mean-difference scale. The central object is the bridge score, a two-dimensional vector formed from both treatment-specific mediator densities at a common mediator value, which serves as a balancing score for the mediator and as a low-dimensional localization scale for the unidentified confounding correction. Building on this score, we derive a sharp pointwise variance envelope for the bridge score sensitivity function, parametrized by latent mean variance and residual selection. We also record a complementary total-variation (range) envelope for support-based restrictions. 
Residual budget calibration translates the latent mean-variance component of the sharp-bound construction into a local residual variance budget, while preserving the same bridge score localization scale and likelihood-ratio residual selection parameter. 
We then illustrate the operationalization of the framework through a Bayesian g-computation sensitivity prior algorithm that propagates observed-data nuisance uncertainty and user-specified sensitivity uncertainty, without claiming that the unidentified corrections are learned from the likelihood.

These contributions are modular. The bridge score can be studied as a mediator-stage balancing score and used as a dimension reduction tool for matching, stratification and weighting, even apart from the particular sensitivity bound developed here. The sharp variance envelope is a pointwise result for additive mediator-outcome confounding and can be combined with other ways of specifying or estimating the localization scale. The residual budget calibration is likewise an operational layer, not a feature of the Bayesian algorithm itself; it can be paired with plug-in, frequentist or Bayesian summaries whenever a local residual variance scale is available. Thus, although the paper combines bridge-score localization, sharp bounding and residual-scale calibration into one workflow, each component has independent methodological content and can be adapted separately.
Additionally, we formulate the framework under the treatment randomization assumption, $\{Y(a,m),M(a'),X\}\indep A$, to emphasize the role of the bridge score both as a balancing device for mediator ignorability and as the scale on which mediator-outcome sensitivity is indexed. The same logic could be adapted to settings under conditional treatment exchangeability, $\{Y(a,m),M(a')\}\indep A \mid X$, by augmenting the localization variables with a treatment assignment balancing score, for example $(e(X),f_0(m\mid X),f_1(m\mid X))$ with $e(X)=\PP(A=1\mid X)$. We leave this extension implicit to keep the present framework centered on mediator-outcome sensitivity.

The framework is tailored to the cross-world natural effect estimand, where the asymmetry between the two treatment arms is the defining structural feature, but the same logic opens several natural directions for extension. First, causal pathways can operate through multiple mediators simultaneously, whether structured into a causal sequence or acting in parallel without a prespecified ordering \citep{vanderweele2013_multiple}. Extending the bridge score to this setting raises questions about whether a single joint score can preserve the arm asymmetry that the natural effect functional depends on, or whether the score must be defined separately for each mediator. Second, we have primarily focused on independent observations, but in other applied settings (e.g., cluster-randomized experiments) where causal mediation analysis is of interest, outcomes and mediators are measured for units within clusters, and the bridge score sensitivity framework would need to account for within-cluster dependence in both the mediator law and the confounding correction \citep{Cheng2026}. Furthermore, \citet{ohnishi2025} have recently developed a Bayesian nonparametric approach for studying the natural indirect effect and the spillover mediation effects in cluster-randomized experiments with multiple unstructured mediators, and it would be of interest to generalize the current framework to similar cluster-correlated data settings. 

\section*{Acknowledgement}
Research in this article was supported by the Patient-Centered Outcomes Research Institute\textsuperscript{\textregistered} (PCORI\textsuperscript{\textregistered} Award ME-2023C1-31350). The statements presented in this article are solely the responsibility of the authors and do not necessarily represent the views of PCORI\textsuperscript{\textregistered}, its Board of Governors or Methodology Committee.

\bibliographystyle{plainnat}
\bibliography{literature}

\newpage
\appendix
\renewcommand{\theequation}{A\arabic{equation}}
\setcounter{equation}{0}
\setcounter{page}{1}

\section{Positioning relative to existing sensitivity analyses}
\label{sec:related_work_summary}

Among the prior work surveyed above, two families are widely used in applied mediation sensitivity analysis and warrant explicit contrast with our framework. One is the parametric analysis based on $\rho$ proposed by \citet{Imai2010}; the other is the risk ratio sharp bound family of \citet{Ding2016_sharp,SjolanderWaernbaum_2024}, together with its scalar E-value reduction \citep{smith2019mediational}. We position the bridge score framework as a complementary tool rather than a replacement. 
The $\rho$-based analysis of \citet{Imai2010} operates on the additive scale but requires fully specified parametric models for both the outcome and the mediator, a restrictive assumption that may fail under model misspecification and that constrains the form of mediator-outcome confounding the analyst can encode. Our bound is nonparametric. It imposes no functional-form restriction on the outcome regression or the mediator law, and the sensitivity parameters separate latent outcome relevance from residual selection rather than using residual-correlation parameters from a working parametric model. The risk ratio bound of \citet{smith2019mediational} is the natural sharp envelope on the multiplicative scale, especially for binary outcomes, and we supply an additive variance-bound counterpart for any outcome type, equipped with bridge score dimension reduction and a residual-scale calibration route. Given such considerations, our framework has the following three differentiating features:
\vspace{-0.05in}
\begin{enumerate}
    \item \emph{Scale.} The sharp bound of \citet{Ding2016_sharp} operates on the risk ratio scale and is most natural for binary outcomes, where the risk ratio is the canonical effect measure; applying it to a general outcome requires either dichotomization or a log-linear approximation. Section~\ref{sec:sharp_bound} develops an additive variance bound on the mean difference scale, so for any outcome type, the analyst can conduct sensitivity analysis without leaving the scale of the natural effect.
    \item \emph{Dimension reduction.} Existing sensitivity analyses do not use the mediator-density bridge score as the localization scale; in common implementations their sensitivity parameters are global or attached to the full covariate/model scale. Proposition~\ref{prop:bridge_tightens} shows that, for the additive bound, calibration on the bridge score can avoid carrying worst-case full covariate sensitivity parameters through the analysis. The likelihood-ratio residual selection parameter at the bridge score level is no larger than its worst-case full covariate counterpart, and the chi-square residual selection parameter is no larger than the corresponding within-stratum average, hence no larger than the within-stratum worst case. The analogous latent mean-variance comparison requires the additional condition stated in the proposition. The bridge score reduction is therefore not only conceptually cleaner but remains informative quantitatively, and it is what makes the additive bound practical when the baseline covariate vector is moderately high-dimensional.
    \item \emph{Residual-scale calibration.} The residual budget calibration in Section~\ref{sec:residual_budget_calibration} translates latent outcome relevance into a variance budget relative to local treated-arm residual variation after conditioning on the mediator and bridge score, while leaving the residual selection parameter on a direct likelihood-ratio scale. This keeps the two components of the additive sensitivity analysis visible and gives the analyst a transparent way to vary outcome relevance and mediator-induced selection separately. Among sensitivity analyses formulated on an additive scale, neither \citet{Imai2010} nor the weighting-based approach of \citet{HongQinYang2018} provides this residual budget calibration for a bridge score envelope.
\end{enumerate}

\section{Proofs}

\subsection{Proof of Lemma~\ref{lem:bridge_balancing}}
\label{proof:lem:bridge_balancing}
\begin{proof}
    Throughout this proof, conditional densities and conditional laws are understood as regular versions. We write $f_{X\mid A=a,M,B(m)}(x\mid m,b)$ and $f_{X\mid A=a,B(m)}(x\mid b)$ for conditional Radon--Nikod\'ym derivatives of $X$ given the indicated variables, and interpret conditioning on $M=m$ through the conditional density $f_a(m\mid x)$ when $M$ is continuous. By Bayes' rule, for almost every $b$,
    \[
    f_{X\mid A=a,M,B(m)}(x\mid m,b)
    \propto
    f_a(m\mid x)\, f_{X\mid A=a,B(m)}(x\mid b),
    \]
    where the proportionality is in $x$. Because $B(m,x)=\bigl(f_0(m\mid x),f_1(m\mid x)\bigr)$, $f_a(m\mid x)=b_a$ for all $x$ in the fiber $\{x:B(m,x)=b\}$, up to the usual null-set convention. This factor is constant in $x$ on the fiber, and normalization gives
    \[
    F_{X\mid A=a,M=m,B(m)=b}=F_{X\mid A=a,B(m)=b}.
    \]
    This proves the balancing property.
    
    To prove pointwise mediator-event ignorability at the distribution level, let $\varphi$ be any bounded measurable function. By iterated expectation,
    \begin{align*}
    &\E\{\varphi(Y(a',m))\mid A=a,M=m,B(m)=b\}\\
    &\qquad=
    \E\left[\E\{\varphi(Y(a',m))\mid A=a,M=m,X\}\mid A=a,M=m,B(m)=b\right].
    \end{align*}
    Under distributional mediator ignorability given $(A,X)$, the inner conditional expectation equals $\E\{\varphi(Y(a',m))\mid A=a,X\}$. Applying the balancing property to this function of $X$ yields
    \[
    \E\{\varphi(Y(a',m))\mid A=a,M=m,B(m)=b\}
    =
    \E\{\varphi(Y(a',m))\mid A=a,B(m)=b\}.
    \]
    Since this equality holds for every bounded measurable $\varphi$, the conditional law of $Y(a',m)$ given $(A=a,M=m,B(m)=b)$ equals the conditional law given $(A=a,B(m)=b)$, proving the fixed-$m$ equality of conditional laws stated in the lemma.
\end{proof}

\subsection{Proof of Proposition \ref{prop:bridge_projection}}
\label{proof:prop:bridge_projection}
\begin{proof}
    Fix $(a,m,b)$. By iterated expectation,
    \begin{align*}
        &\E\{Y(1,m)\mid A=a,M=m,B(m)=b\}\\
        &\qquad=
        \E\!\left[\E\{Y(1,m)\mid A=a,M=m,X\}\mid A=a,M=m,B(m)=b\right]\\
        &\qquad=
        \E\!\left[\E\{Y(1,m)\mid A=a,X\}+\Delta_a^\ast(m,X)\mid A=a,M=m,B(m)=b\right].
    \end{align*}
    By Lemma~\ref{lem:bridge_balancing},
    \[
    \E\!\left[\E\{Y(1,m)\mid A=a,X\}\mid A=a,M=m,B(m)=b\right]
    =
    \E\{Y(1,m)\mid A=a,B(m)=b\}.
    \]
    Subtracting the latter term gives
    \[
    \Delta_a(m,b)
    =
    \E\!\left[\Delta_a^\ast(m,X)\mid A=a,M=m,B(m)=b\right].
    \]
    A second application of Lemma~\ref{lem:bridge_balancing} yields the stated formula.
\end{proof}

\subsection{Proof of Theorem \ref{thm:bridge_identification}}
\label{proof:thm:bridge_identification}
\begin{proof}
    The proof proceeds in two steps. Step~1 identifies $\E\{Y(1,m)\mid M(0)=m,X\}$ at the full covariate level by combining consistency, treatment randomization, and the sensitivity function at the covariate level $\Delta_a^\ast$. Step~2 projects the covariate level identification formula onto the bridge score using the balancing property of Lemma~\ref{lem:bridge_balancing} and Proposition~\ref{prop:bridge_projection}, and applies Fubini's theorem to obtain \eqref{eq:bridge_theta_identified}.
    
    By iterated expectation,
    \[
    \theta
    =
    \E\!\left[
    \int_{\calM}
    \E\{Y(1,m)\mid M(0)=m,X\}
    \, dF_{M(0)\mid X}(m\mid X)
    \right].
    \]
    By Assumption~\ref{ass:treatment_randomization} and Assumption~\ref{ass:bridge_consistency}, the conditional law of $M(0)$ given $X$ equals the observed control-arm law, so we have
    \[
    dF_{M(0)\mid X}(m\mid X)=f_0(m\mid X)\,\nu(dm).
    \]
    
    \medskip
    \noindent\textit{Step 1: covariate level identification.}
    Fix $m\in\calM$ and $x\in\calX$.
    By Assumption~\ref{ass:treatment_randomization}, treatment is randomized with respect to $(Y(1,m),M(0),X)$, so we have
    \[
    \E\{Y(1,m)\mid M(0)=m,X=x\}
    =
    \E\{Y(1,m)\mid A=0,M(0)=m,X=x\}.
    \]
    By Assumption~\ref{ass:bridge_consistency}, when $A=0$ we have $M(0)=M$, so
    \[
    \E\{Y(1,m)\mid M(0)=m,X=x\}
    =
    \E\{Y(1,m)\mid A=0,M=m,X=x\}.
    \]
    By definition of $\Delta_0^\ast(m,x)$,
    \[
    \E\{Y(1,m)\mid A=0,M=m,X=x\}
    =
    \E\{Y(1,m)\mid A=0,X=x\}+\Delta_0^\ast(m,x).
    \]
    By Assumption~\ref{ass:treatment_randomization},
    \[
    \E\{Y(1,m)\mid A=0,X=x\}
    =
    \E\{Y(1,m)\mid A=1,X=x\}.
    \]
    By Assumption~\ref{ass:bridge_consistency},
    $\E(Y\mid A{=}1,M{=}m,X{=}x)=\E\{Y(1,m)\mid A{=}1,M{=}m,X{=}x\}$.
    Subtracting $\Delta_1^\ast(m,x)$,
    \[
    \E\{Y(1,m)\mid A=1,X=x\}
    =
    \E(Y\mid A=1,M=m,X=x)-\Delta_1^\ast(m,x).
    \]
    Combining the preceding displays,
    \begin{equation}
        \label{eq:cov_level_id}
        \E\{Y(1,m)\mid M(0)=m,X=x\}
        =
        \E(Y\mid A{=}1,M{=}m,X{=}x)
        -\Delta_1^\ast(m,x)+\Delta_0^\ast(m,x),
    \end{equation}
    and therefore
    \[
    \theta
    =
    \E\!\left[
    \int_{\calM}
    \bigl\{
    \E(Y\mid A{=}1,M{=}m,X)-\Delta_1^\ast(m,X)+\Delta_0^\ast(m,X)
    \bigr\}
    \,f_0(m\mid X)\,\nu(dm)
    \right].
    \]
    
    \medskip
    \noindent\textit{Step 2: projection onto the bridge score.}
    Since $f_0(m\mid X)=B_0(m,X)$ is $\sigma\bigl(B(m)\bigr)$-measurable, the tower property gives, for any function $g(m,X)$ and each fixed $m$,
    \[
    \E\bigl[g(m,X)\,f_0(m\mid X)\bigr]
    =
    \E\bigl[f_0(m\mid X)\cdot\E\{g(m,X)\mid B(m)\}\bigr].
    \]
    We claim that, for any integrable function $g$, $\E\{g(m,X)\mid B(m)\}$ equals the corresponding bridge score-level quantity for each of the three terms in \eqref{eq:cov_level_id}.
    
    For the outcome regression, note that $h(X):=\E(Y\mid A{=}1,M{=}m,X)$ is a function of $X$ alone. By Assumption~\ref{ass:treatment_randomization}, $X\indep A$, so
    $\E\{h(X)\mid B(m)\}=\E\{h(X)\mid A{=}1,B(m)\}$.
    By Lemma~\ref{lem:bridge_balancing},
    $\E\{h(X)\mid A{=}1,B(m)\}=\E\{h(X)\mid A{=}1,M{=}m,B(m)\}$.
    By iterated expectations, the right-hand side equals $\E(Y\mid A{=}1,M{=}m,B(m))=\mu_1\bigl(m,B(m)\bigr)$.
    
    For the sensitivity functions, Proposition~\ref{prop:bridge_projection} and Assumption~\ref{ass:treatment_randomization} give
    $\E\{\Delta_a^\ast(m,X)\mid B(m)\}
    =\E\{\Delta_a^\ast(m,X)\mid A{=}a,B(m)\}
    =\Delta_a\bigl(m,B(m)\bigr)$ for $a\in\{0,1\}$.
    
    Substituting and applying Fubini's theorem,
    \[
    \theta
    =
    \E\!\left[
    \int_{\calM}
    \Bigl\{
    \mu_1\bigl(m,B(m)\bigr)-\Delta_1\bigl(m,B(m)\bigr)+\Delta_0\bigl(m,B(m)\bigr)
    \Bigr\}
    f_0(m\mid X)\,\nu(dm)
    \right],
    \]
    which is \eqref{eq:bridge_theta_identified}.
\end{proof}

\subsection{Proof of Theorem \ref{thm:sharp_bound}}
\label{proof:thm:sharp_bound}
\begin{proof}
    Fix $a\in\{0,1\}$ and $(m,b)$ throughout. Write
    \[
    F_1(du)=F_{U\mid A=a,M=m,B(m)=b}(du),
    \qquad
    F_0(du)=F_{U\mid A=a,B(m)=b}(du),
    \]
    let $h=dF_1/dF_0$, and let $\E_0$ denote expectation under $F_0$.
    By Assumption~\ref{ass:bridge_bound_latent} and iterated expectation,
    \[
    \Delta_a(m,b)
    =
    \int \psi_a(u;m,b)\{F_1(du)-F_0(du)\}
    =
    \E_0[\{h(U)-1\}\psi_a(U;m,b)].
    \]
    Since $h$ is the Radon--Nikod\'ym derivative of a probability law with respect to $F_0$,
    $\E_0\{h(U)\}=1$, and hence $\E_0\{h(U)-1\}=0$. Therefore
    \[
    \Delta_a(m,b)
    =
    \E_0\!\left[
    \{h(U)-1\}
    \{\psi_a(U;m,b)-\E_0\psi_a(U;m,b)\}
    \right].
    \]
    Applying Cauchy--Schwarz gives
    \[
    |\Delta_a(m,b)|
    \le
    \sqrt{
    \E_0\!\left[
    \{\psi_a(U;m,b)-\E_0\psi_a(U;m,b)\}^2
    \right]}
    \sqrt{\E_0[\{h(U)-1\}^2]}
    =
    \sqrt{v_a(m,b)\chi_a(m,b)}.
    \]

    This inequality is sharp for fixed $(F_0,F_1)$ and fixed $v_a(m,b)$. If $\chi_a(m,b)=0$, then $h=1$ $F_0$-almost surely and both sides are zero. If $\chi_a(m,b)>0$, set
    \[
    \psi_a(u;m,b)=\mu+c\{h(u)-1\},
    \qquad
    c=\sqrt{\frac{v_a(m,b)}{\chi_a(m,b)}},
    \]
    for an arbitrary constant $\mu$. Then
    $\Var_0\{\psi_a(U;m,b)\}=c^2\chi_a(m,b)=v_a(m,b)$ and
    \[
    \Delta_a(m,b)
    =
    c\,\E_0[\{h(U)-1\}^2]
    =
    \sqrt{v_a(m,b)\chi_a(m,b)}.
    \]
    Taking $c$ with the opposite sign attains the lower endpoint.

    It remains to relate $\chi_a$ to $\gamma_a$. Because $0\le h(u)\le \gamma_a(m,b)$ $F_0$-almost surely,
    \[
    \chi_a(m,b)
    =
    \E_0[h^2(U)]-1
    \le
    \gamma_a(m,b)\E_0[h(U)]-1
    =
    \gamma_a(m,b)-1.
    \]
    Combining this with the first bound gives \eqref{eq:sharp_l2_gamma_bound}.

    For finite $\gamma_a(m,b)>1$, the likelihood-ratio envelope is sharp over density ratios satisfying $0\le h\le \gamma_a(m,b)$ and $\E_0h=1$. Let $F_0$ be the uniform law on $[0,1]$, set $C=[0,1/\gamma_a(m,b)]$, and define $h(u)=\gamma_a(m,b)\mathbf{1}\{u\in C\}$. Then $h$ is a probability density ratio, $\E_0[h]=1$, and $\chi_a(m,b)=\gamma_a(m,b)-1$. Choosing $\psi_a(u;m,b)=\mu+c\{h(u)-1\}$ with
    $c=\sqrt{v_a(m,b)/\{\gamma_a(m,b)-1\}}$ attains
    $\sqrt{v_a(m,b)\{\gamma_a(m,b)-1\}}$. The case $\gamma_a(m,b)=1$ is degenerate with $h=1$ and $\Delta_a(m,b)=0$.
\end{proof}

\subsection{Proof of Theorem \ref{thm:range_bound}}
\label{proof:thm:range_bound}
The proof has four steps. Step~1 represents $\Delta_a(m,b)$ as a signed integral of the latent outcome regression $\psi_a(u;m,b)$ against the selected and reduced laws of $U$. Step~2 rescales $\psi_a$ to a bounded function in $[0,1]$ using the essential outcome range parameter $\eta_a(m,b)$. Step~3 applies a total-variation inequality under the likelihood-ratio bound $dF_1/dF_0\le \gamma_a(m,b)$ almost surely, with the convention $r(\infty)=1$. Step~4 gives a local construction attaining both endpoints when only $(\eta_a,\gamma_a)$ and the support admissibility restriction are fixed and $\gamma_a<\infty$.

\begin{proof}
    Fix $a\in\{0,1\}$ and $(m,b)$ throughout. Define
    \[
    F_1(du)=F_{U\mid A=a,M=m,B(m)=b}(du),
    \qquad
    F_0(du)=F_{U\mid A=a,B(m)=b}(du),
    \]
    and let $h=dF_1/dF_0$. By definition, $0\le h\le \gamma_a(m,b)$ $F_0$-almost surely and $\int h\,dF_0=1$.
    
    \medskip
    \noindent\textit{Step 1: integral representation.}
    By Assumption~\ref{ass:bridge_bound_latent} and iterated expectation,
    \begin{align*}
        \E\{Y(1,m)\mid A=a,M=m,B(m)=b\}
        &=
        \int \psi_a(u;m,b)\,F_1(du),\\
        \E\{Y(1,m)\mid A=a,B(m)=b\}
        &=
        \int \psi_a(u;m,b)\,F_0(du).
    \end{align*}
    Therefore
    \begin{equation}
        \Delta_a(m,b)=\int \psi_a(u;m,b)\{F_1(du)-F_0(du)\}.
        \label{eq:delta_integral_appendix}
    \end{equation}
    
    \medskip
    \noindent\textit{Step 2: reduction to a bounded function.}
    Let
    \[
    \overline \psi=\esssup_{u\sim F_0}\psi_a(u;m,b),
    \qquad
    \underline \psi=\essinf_{u\sim F_0}\psi_a(u;m,b),
    \]
    so that $\eta_a(m,b)=\overline \psi-\underline \psi$. If $\eta_a(m,b)=0$, then $\psi_a$ is constant $F_0$-almost surely, and because $F_1\ll F_0$, the same is true $F_1$-almost surely, giving $\Delta_a(m,b)=0$. Otherwise define
    \[
    \rho(u)=\frac{\psi_a(u;m,b)-\underline \psi}{\eta_a(m,b)}.
    \]
    Then $0\le \rho(u)\le 1$ $F_0$-almost surely and, by absolute continuity, $F_1$-almost surely. The constant term cancels in \eqref{eq:delta_integral_appendix}, yielding
    \begin{equation}
        \Delta_a(m,b)=\eta_a(m,b)\int \rho(u)\{F_1(du)-F_0(du)\}.
        \label{eq:delta_eta_r}
    \end{equation}
    
    \medskip
    \noindent\textit{Step 3: likelihood-ratio total-variation bound.}
    Here total variation is $\sup_C|F_1(C)-F_0(C)|$. If $\gamma_a(m,b)=\infty$, then $\sup_C|F_1(C)-F_0(C)|\le1$, so \eqref{eq:delta_eta_r} immediately gives $|\Delta_a(m,b)|\le\eta_a(m,b)$. Hence suppose for the remainder of this step that $\gamma_a(m,b)<\infty$. For any measurable set $C$,
    \[
    F_1(C)-F_0(C)
    =
    \int_C(h-1)\,dF_0.
    \]
    The bound $h\le\gamma_a$ gives $F_1(C)-F_0(C)\le(\gamma_a-1)F_0(C)$, while the trivial bound $F_1(C)\le1$ gives $F_1(C)-F_0(C)\le1-F_0(C)$. Hence
    \[
    F_1(C)-F_0(C)
    \le
    \min\{(\gamma_a-1)F_0(C),\,1-F_0(C)\}
    \le
    \frac{\gamma_a-1}{\gamma_a}.
    \]
    Similarly,
    \[
    F_0(C)-F_1(C)=\int_C(1-h)\,dF_0\le F_0(C),
    \]
    and from $1=\int h\,dF_0\le \gamma_aF_0(C^c)+F_1(C)$ we obtain $F_0(C)-F_1(C)\le(\gamma_a-1)/\gamma_a$. Thus the total variation distance satisfies
    \[
    \sup_C |F_1(C)-F_0(C)|\le \frac{\gamma_a-1}{\gamma_a}.
    \]
    Since $0\le \rho\le1$, the Hahn decomposition characterization of signed measures gives
    \[
    \left|\int \rho(u)\{F_1(du)-F_0(du)\}\right|
    \le
    \sup_C |F_1(C)-F_0(C)|
    \le
    \frac{\gamma_a-1}{\gamma_a}.
    \]
    Combining this display with \eqref{eq:delta_eta_r} proves \eqref{eq:sharp_additive_bound} in the finite-$\gamma_a$ case; the infinite case was handled at the start of this step.
    
    \medskip
    \noindent\textit{Step 4: local endpoint attainability for finite $\gamma_a$.}
    This step does not fix any observed conditional outcome margin or any compatibility constraint across other strata. If $\gamma_a=1$, take $F_1=F_0$, which gives $\Delta_a(m,b)=0$. Suppose $1<\gamma_a<\infty$. Let $F_0$ be the uniform law on $[0,1]$, let $C^\star=[0,1/\gamma_a]$, and define
    \[
    \frac{dF_1}{dF_0}(u)=\gamma_a\mathbf{1}\{u\in C^\star\}.
    \]
    This is a probability density with essential supremum $\gamma_a$. Choose values $\underline \psi$ and $\overline \psi$ in the allowable support of the conditional mean of $Y(1,m)$ with $\overline \psi-\underline \psi=\eta_a(m,b)$; for $Y\in[L_Y,U_Y]$ this requires $\eta_a(m,b)\le U_Y-L_Y$. Set
    \[
    \psi_a(u;m,b)=\underline \psi+(\overline \psi-\underline \psi)\mathbf{1}\{u\in C^\star\}.
    \]
    Then the essential range of $\psi_a$ is $\eta_a(m,b)$ and
    \[
    \Delta_a(m,b)=(\overline \psi-\underline \psi)\frac{\gamma_a-1}{\gamma_a}
    =
    \eta_a(m,b)\frac{\gamma_a(m,b)-1}{\gamma_a(m,b)}.
    \]
    Replacing $\mathbf{1}\{u\in C^\star\}$ by $\mathbf{1}\{u\notin C^\star\}$ attains the lower endpoint. Thus both endpoints are attainable for the primitive pointwise sensitivity function under the limited finite-$\gamma_a$ conditions stated in the theorem. When $\gamma_a=\infty$, the envelope $\eta_a(m,b)$ is the limiting total-variation/range bound and finite-$\gamma_a$ attainability is not asserted.
\end{proof}

\subsection{Proof of Proposition \ref{prop:bridge_tightens}}
\label{proof:prop:bridge_tightens}
\begin{proof}
    Fix $a\in\{0,1\}$, $m\in\calM$, and $b$ in the support of $B(m)$. Let
    \[
    \mathcal X_b=\{x:B(m,x)=b\}
    \]
    denote the set of covariate values mapping to the bridge score value $b$
    at mediator $m$.
    
    \medskip
    \noindent\textit{Step 1: mixture representation of the bridge score
        conditional laws.}
    By the law of total probability,
    \begin{align}
        F(du\mid A=a,M=m,B(m)=b)
        &=
        \int_{\mathcal X_b}
        F(du\mid A=a,M=m,X=x)\,
        F(dx\mid A=a,M=m,B(m)=b),
        \label{eq:mix_num}\\
        F(du\mid A=a,B(m)=b)
        &=
        \int_{\mathcal X_b}
        F(du\mid A=a,X=x)\,
        F(dx\mid A=a,B(m)=b).
        \label{eq:mix_den}
    \end{align}
    By Lemma~\ref{lem:bridge_balancing},
    \[
    F(dx\mid A=a,M=m,B(m)=b)
    =
    F(dx\mid A=a,B(m)=b).
    \]
    Denote this common mixing measure by $\mu_{a,b}(dx)$. Then
    \begin{align}
        F(du\mid A=a,M=m,B(m)=b)
        &=
        \int_{\mathcal X_b}
        F(du\mid A=a,M=m,X=x)\,\mu_{a,b}(dx),
        \label{eq:mix_num2}\\
        F(du\mid A=a,B(m)=b)
        &=
        \int_{\mathcal X_b}
        F(du\mid A=a,X=x)\,\mu_{a,b}(dx).
        \label{eq:mix_den2}
    \end{align}
    
    \medskip
    \noindent\textit{Step 2: setwise proof of \eqref{eq:gamma_tightens}.}
    Let
    \[
    \bar\gamma_a(m,b)=\esssup_{x\in\mathcal X_b}\gamma_a^\ast(m,x),
    \]
    where the essential supremum is with respect to $\mu_{a,b}$. For every measurable set $C$ in the support of $U$, the full covariate likelihood-ratio condition implies
    \[
    F(C\mid A=a,M=m,X=x)
    \le
    \gamma_a^\ast(m,x)F(C\mid A=a,X=x)
    \le
    \bar\gamma_a(m,b)F(C\mid A=a,X=x)
    \]
    for $\mu_{a,b}$-almost every $x\in\mathcal X_b$. Integrating this setwise inequality over the common bridge score mixing law gives
    \begin{align*}
    F(C\mid A=a,M=m,B(m)=b)
    &=
    \int_{\mathcal X_b}F(C\mid A=a,M=m,X=x)\,\mu_{a,b}(dx)\\
    &\le
    \bar\gamma_a(m,b)
    \int_{\mathcal X_b}F(C\mid A=a,X=x)\,\mu_{a,b}(dx)\\
    &=
    \bar\gamma_a(m,b)F(C\mid A=a,B(m)=b).
    \end{align*}
    Thus $F_{U\mid A=a,M=m,B(m)=b}\ll F_{U\mid A=a,B(m)=b}$ and its Radon--Nikod\'ym derivative is bounded by $\bar\gamma_a(m,b)$ almost surely under the latter law. Taking the essential supremum of this derivative yields
    \[
    \gamma_a(m,b)
    \le
    \bar\gamma_a(m,b)
    =
    \esssup_{x\in\mathcal X_b}\gamma_a^\ast(m,x),
    \]
    which proves \eqref{eq:gamma_tightens}.
    
    Taking the essential supremum over $b$,
    \[
    \esssup_b\gamma_a(m,b)
    \le
    \esssup_b\esssup_{x\in\mathcal X_b}\gamma_a^\ast(m,x)
    =
    \esssup_x\gamma_a^\ast(m,x),
    \]
    because the collection $\{\mathcal X_b\}_b$ partitions the support of $X$ for fixed $m$.

    \medskip
    \noindent\textit{Step 3: comparison for the chi-square residual-selection parameter.}
    For $x\in\mathcal X_b$, write
    \[
    P_x(du)=F(du\mid A=a,M=m,X=x),
    \qquad
    Q_x(du)=F(du\mid A=a,X=x).
    \]
    By Step 1, the bridge score selected and reduced latent laws are the mixtures
    \[
    P(du)=\int_{\mathcal X_b}P_x(du)\,\mu_{a,b}(dx),
    \qquad
    Q(du)=\int_{\mathcal X_b}Q_x(du)\,\mu_{a,b}(dx).
    \]
    Consider the joint laws
    \[
    \widetilde P(dx,du)=P_x(du)\,\mu_{a,b}(dx),
    \qquad
    \widetilde Q(dx,du)=Q_x(du)\,\mu_{a,b}(dx).
    \]
    Their chi-square divergence satisfies
    \[
    D_{\chi^2}(\widetilde P\|\widetilde Q)
    =
    \int_{\mathcal X_b}
    D_{\chi^2}(P_x\|Q_x)\,\mu_{a,b}(dx)
    =
    \int_{\mathcal X_b}
    \chi_a^\ast(m,x)\,\mu_{a,b}(dx).
    \]
    Since $P$ and $Q$ are the $U$-marginals of $\widetilde P$ and $\widetilde Q$, the data-processing inequality for $f$-divergences gives
    \[
    \chi_a(m,b)
    =
    D_{\chi^2}(P\|Q)
    \le
    D_{\chi^2}(\widetilde P\|\widetilde Q).
    \]
    Therefore,
    \[
    \chi_a(m,b)
    \le
    \int_{\mathcal X_b}
    \chi_a^\ast(m,x)\,\mu_{a,b}(dx)
    =
    \E\{\chi_a^\ast(m,X)\mid A=a,B(m)=b\}
    \le
    \esssup_{x\in\mathcal X_b}\chi_a^\ast(m,x),
    \]
    which proves \eqref{eq:chi_tightens}.

    \medskip
    \noindent\textit{Step 4: outcome-relevance comparisons under
        $X\indep U\mid A=a,B(m)=b$.}
    Under the additional assumption
    \[
    X\indep U\mid A=a,B(m)=b,
    \]
    the conditional law of $X$ given $(A=a,B(m)=b,U=u)$ does not depend on
    $u$. Denote this common measure by $\mu_{a,b}(dx)$. Then, by iterated
    expectation,
    \begin{align*}
        \psi_a(u;m,b)
        &=
        \E\{Y(1,m)\mid A=a,B(m)=b,U=u\}\\
        &=
        \int_{\mathcal X_b}
        \E\{Y(1,m)\mid A=a,X=x,U=u\}\,\mu_{a,b}(dx)\\
        &=
        \int_{\mathcal X_b}
        \psi_a^\ast(u;m,x)\,\mu_{a,b}(dx).
    \end{align*}
    Let expectation and variance below be taken with respect to the common law of $U\mid A=a,B(m)=b$, which is also the law of $U\mid A=a,X=x$ for $x\in\mathcal X_b$ under the independence condition. Define
    \[
    \bar\psi_a^\ast(m,x)
    =
    \E\{\psi_a^\ast(U;m,x)\mid A=a,X=x\}.
    \]
    Since
    $\E\{\psi_a(U;m,b)\mid A=a,B(m)=b\}
    =
    \int_{\mathcal X_b}\bar\psi_a^\ast(m,x)\,\mu_{a,b}(dx)$,
    Jensen's inequality gives
    \begin{align*}
    v_a(m,b)
    &=
    \Var\!\left\{
    \int_{\mathcal X_b}
    \psi_a^\ast(U;m,x)\,\mu_{a,b}(dx)
    \,\middle|\, A=a,B(m)=b
    \right\}\\
    &=
    \E\!\left[
    \left\{
    \int_{\mathcal X_b}
    \{\psi_a^\ast(U;m,x)-\bar\psi_a^\ast(m,x)\}\,
    \mu_{a,b}(dx)
    \right\}^2
    \middle| A=a,B(m)=b
    \right]\\
    &\le
    \int_{\mathcal X_b}
    \E\!\left[
    \{\psi_a^\ast(U;m,x)-\bar\psi_a^\ast(m,x)\}^2
    \mid A=a,X=x
    \right]
    \mu_{a,b}(dx)\\
    &\le
    \esssup_{x\in\mathcal X_b}v_a^\ast(m,x),
    \end{align*}
    proving \eqref{eq:v_tightens}.

    The same representation gives the range comparison used for Theorem~\ref{thm:range_bound}. Fix $u$ and $u'$. Then
    \[
    \psi_a(u;m,b)-\psi_a(u';m,b)
    =
    \int_{\mathcal X_b}
    \{\psi_a^\ast(u;m,x)-\psi_a^\ast(u';m,x)\}\,\mu_{a,b}(dx).
    \]
    Therefore,
    \begin{align*}
        |\psi_a(u;m,b)-\psi_a(u';m,b)|
        &\le
        \int_{\mathcal X_b}
        |\psi_a^\ast(u;m,x)-\psi_a^\ast(u';m,x)|\,\mu_{a,b}(dx)\\
        &\le
        \esssup_{x\in\mathcal X_b}
        |\psi_a^\ast(u;m,x)-\psi_a^\ast(u';m,x)|\\
        &\le
        \esssup_{x\in\mathcal X_b}\eta_a^\ast(m,x).
    \end{align*}
    Taking the essential supremum over $u$ and $u'$ yields
    \[
    \eta_a(m,b)
    =
    \esssup_u \psi_a(u;m,b)-\essinf_u \psi_a(u;m,b)
    =
    \esssup_{u,u'}|\psi_a(u;m,b)-\psi_a(u';m,b)|
    \le
    \esssup_{x\in\mathcal X_b}\eta_a^\ast(m,x),
    \]
    which proves the displayed range comparison in the proposition.
\end{proof}

\subsection{Proof of Corollary \ref{cor:scalar_reduction}}
\label{proof:cor:scalar_reduction}
\begin{proof}
    The proof has three parts: (i) the scalar identification \eqref{eq:theta_scalar}; (ii) the upper bound $|\bar{\Delta}_a|\le \bar{\Xi}_a$; and (iii) the comparison $\bar{\Xi}_a\le \esssup_{(m,b)}\Xi_a(m,b)$ together with its equality condition.
    
    \smallskip
    \noindent\textit{Part (i): scalar identification.}
    By Theorem~\ref{thm:bridge_identification},
    \[
    \theta
    =
    \E\!\left[
    \int_{\calM}
    \bigl\{
    \mu_1\bigl(m,B(m)\bigr)
    -\Delta_1\bigl(m,B(m)\bigr)
    +\Delta_0\bigl(m,B(m)\bigr)
    \bigr\}
    f_0(m\mid X)\,\nu(dm)
    \right].
    \]
    The integrand is integrable under the moment conditions of Theorem~\ref{thm:bridge_identification}, so by Fubini-Tonelli the outer expectation and the inner integral may be exchanged, and linearity of expectation yields
    \[
    \theta
    =
    \theta_{\mathrm{SI}}
    +\bar{\Delta}_0-\bar{\Delta}_1,
    \]
    with $\theta_{\mathrm{SI}}$, $\bar{\Delta}_0$, $\bar{\Delta}_1$ defined in \eqref{eq:theta_SI_def} and \eqref{eq:delta_bar}. This establishes \eqref{eq:theta_scalar}. Consequently, $\theta$ depends on the pointwise bridge score sensitivity functions $(\Delta_0,\Delta_1)$ only through the arm-specific scalar functionals $(\bar{\Delta}_0,\bar{\Delta}_1)$.
    
    \smallskip
    \noindent\textit{Part (ii): upper bound and sharpness over the rectangular sensitivity class.}
    By the assumed pointwise envelope, for every $(m,b)$ in the support,
    \begin{equation}
        \bigl|\Delta_a(m,b)\bigr|\le \Xi_a(m,b),
        \qquad a\in\{0,1\}.
        \label{eq:pointwise_bound_in_cor_proof}
    \end{equation}
    Since $f_0(\cdot\mid X)\ge 0$, applying the triangle inequality for integrals and \eqref{eq:pointwise_bound_in_cor_proof} gives
    \begin{align}
        \bigl|\bar{\Delta}_a\bigr|
        &=
        \left|\E\!\left[\int_{\calM}\Delta_a\bigl(m,B(m)\bigr)\,f_0(m\mid X)\,\nu(dm)\right]\right|\\
        &\le
        \E\!\left[\int_{\calM}\bigl|\Delta_a\bigl(m,B(m)\bigr)\bigr|\,f_0(m\mid X)\,\nu(dm)\right]\\
        &\le
        \E\!\left[\int_{\calM}\Xi_a\bigl(m,B(m)\bigr)\,f_0(m\mid X)\,\nu(dm)\right]
        =
        \bar{\Xi}_a.
    \end{align}
    This establishes \eqref{eq:xi_bar}. To see sharpness over the rectangular class, fix $s_a\in\{-1,1\}$ and set $\Delta_a(m,b)=s_a\Xi_a(m,b)$ on the integration support, with any measurable extension outside that support. This function satisfies the pointwise envelope and gives $\bar\Delta_a=s_a\bar\Xi_a$. Hence both endpoints $\pm\bar\Xi_a$ are attainable for each arm. Choosing $(\Delta_0,\Delta_1)=(\Xi_0,-\Xi_1)$ attains the upper endpoint of \eqref{eq:theta_aggregate_sharp_interval}, and choosing $(\Delta_0,\Delta_1)=(-\Xi_0,\Xi_1)$ attains the lower endpoint.
    
    \smallskip
    \noindent\textit{Part (iii): comparison with the pointwise essential supremum.}
    By monotonicity of the integral,
    \[
    \bar{\Xi}_a
    =
    \E\!\left[\int_{\calM}\Xi_a\bigl(m,B(m)\bigr)\,f_0(m\mid X)\,\nu(dm)\right]
    \le
    \esssup_{(m,b)}\Xi_a(m,b),
    \]
    with equality if and only if $\Xi_a(m,B(m))$ equals $\esssup_{(m,b)}\Xi_a(m,b)$, $f_0$-almost surely on the support of $(M,B(m))$ induced by $f_0(\cdot\mid X)$.
\end{proof}

\end{document}